\documentclass[11pt]{article}

\usepackage{enumitem}
\usepackage{comment}
\usepackage[english]{babel}
\usepackage{amsthm}
\usepackage[utf8]{inputenc}
\usepackage{xcolor}
\usepackage{amsmath}
\usepackage{physics}
\usepackage[margin=1in]{geometry}
\usepackage{titlesec}
\usepackage{natbib}
\usepackage{graphicx}
\usepackage{graphicx} 
\usepackage{float}

\newtheorem{corollary}{Corollary}

\newtheorem{definition}{Definition}
\newtheorem{proposition}{Proposition}

\newtheorem{example}{Example}

\usepackage{caption}
\usepackage{subcaption}

\usepackage{algorithm}
\usepackage{algpseudocode}

\usepackage{amsmath}

\DeclareMathOperator*{\argmin}{arg\,min}

\usepackage{xcolor}
\usepackage{hyperref}
\hypersetup{%
  colorlinks=false,
  linkbordercolor=red
}

\usepackage{natbib}
\usepackage{graphicx}
\providecommand{\keywords}[1]{\textbf{\textit{Keywords---}} #1}

\usepackage{lineno}
%\linenumbers

\title{
\textbf{  Effective  Experience Rating for Large Insurance Portfolios via Surrogate Modeling }
}

\author{
\textbf{Sebastián Calcetero Vanegas*, Andrei L. Badescu, X. Sheldon Lin} \\
Department of Statistical Sciences\\
  University of Toronto\\
  Toronto, Ontario \\
  *\texttt{sebastian.calcetero@mail.utoronto.ca} \\
  }
  
%\date{July 2022}

\begin{document}

\maketitle

\begin{abstract}

 Experience rating in insurance uses a Bayesian credibility model to upgrade the current premiums of a contract by taking into account policyholders' attributes and their claim history. Most data-driven models used for this task are mathematically intractable, and premiums must be obtained through numerical methods such as simulation via MCMC. However, these methods can be computationally expensive and even prohibitive for large portfolios when applied at the policyholder level. Additionally, these computations become ``black-box" procedures as there is no analytical expression showing how the claim history of policyholders is used to upgrade their premiums. To address these challenges, this paper proposes a surrogate modeling approach to inexpensively derive an analytical expression for computing the Bayesian premiums for any given model, approximately.  As a part of the methodology, the paper introduces a \emph{likelihood-based summary statistic} of the policyholder's claim history that serves as the main input of the surrogate model and that is sufficient for certain families of distribution, including the exponential dispersion family.   As a result, the computational burden of experience rating for large portfolios is reduced through the direct evaluation of such analytical expression, which can provide a transparent and interpretable way of computing Bayesian premiums.

\end{abstract}

\keywords{
Credibility, Surrogate modeling, Ratemaking, Bayesian Regression, Experience Rating}

\section{Introduction}
Credibility, experience rating, or more recently the so-called ``a posteriori" ratemaking are fundamental areas in actuarial science that enable actuaries to adjust premiums based on a policyholder's experience. Classical credibility theory, as formulated by \cite{Buhlmann2005course}, is based on the theory of Bayesian inference. The current understanding of a policyholder's risk behavior serves as the ``a priori" information, which when combined with the actual claims experience, results in an ``a posteriori" understanding of the policyholder's true risk behavior. This posterior knowledge can then be used to calculate upgraded premiums. The nature of the Bayesian credibility model depends on the goals of the modeling question and, more importantly, on the behavior of the data. Insurance data can be highly complex due to the size of the portfolio of policyholders and its high degree of heterogeneity. Consequently, several modeling approaches to calculate upgraded premiums have been explored in the actuarial literature.

As discussed by \cite{norberg2004credibility}, actuaries have primarily used simplistic assumptions under the classes of Bayesian conjugate-prior families, best linear approximations or the so-called Bonus-Malus Systems (BMS), e.g., \cite{asmussen2020chapter}, to approach the problem of experience rating.  Most commonly, they use the Bühlmann credibility formula and its variations (e.g., Bühlmann-Straub). The latter provides a non-parametric approach to credibility with a clear interpretation and inexpensive computations by approximating the predictive mean via a weighted average of the claim history's sample mean and the manual premium. However, this formula is extremely restrictive for experience rating applications because it only provides values for the predictive mean and not other quantities of interest to actuaries (e.g., variance, quantiles, probabilities, etc). Additionally, the resulting premium using Bühlmann's formula is not data-driven since it relies on strong assumptions such as linearity and lacks compatibility with different model structures (e.g., multivariate models, regression models, heavy-tail models). It is known that this expression is exact only for a limited modeling framework, e.g., \cite{jewell1974credible}, and thus provides poor performance when the true predictive mean is non-linear \cite{gomez2014suitable}.

In today's insurance landscape, granular level information is increasingly available, and more complex insurance products, such as telematics-based insurance, have been developed. As a result, it is necessary to use tailored credibility models that account for the complexity of the data sets while still meeting regulatory requirements and standards \cite{denuit2019multivariate}. Such complicated structures that may appear in general insurance cannot be easily addressed by the traditional conjugate prior credibility framework or simplistic analytical models. Furthermore, the inclusion of policyholder attributes is highly relevant, as noted by  \cite{ohlsson2008combining}, \cite{garrido2016generalized} \cite{xacur2018bayesian}, and  \cite{diao2019regression}, and complicates the nature of the credibility model. Therefore, it is imperative to use general Bayesian modeling approaches that are data-driven and can provide a more flexible framework from which several quantities of interest can be derived for ratemaking, claim reserving, and risk management applications.

The general Bayesian workflow has not been widely used in insurance applications due to the challenges of obtaining analytical expressions. Many of the Bayesian models that provide a reasonable fit to real insurance data sets are mathematically intractable, and deriving Bayesian premiums requires computationally expensive numerical approximations, such as simulations via Markov Chain Monte Carlo (MCMC) methods. See, e.g., \cite{xacur2018bayesian}, \cite{zhang2018bayesian}, \cite{ahn2021ordering}. Dealing with large and heterogeneous insurance portfolios is one of the significant challenges with these numerical methods, as a considerable number of simulations from the posterior predictive distribution of each policyholder are required. Additionally, if a large amount of experience is observed, these simulations must be repeatedly performed to update the premiums. Another issue with analytically intractable Bayesian models is the ``black-box" approach used to derive premiums. These quantities are obtained through numerical approximations and not with analytical expressions, which makes the upgrading premium system lack practical interpretation, thus becoming unappealing to practitioners as the results are not explainable to clients or regulators.

Our paper aims to facilitate and promote data-driven Bayesian credibility models for experience rating in insurance by proposing an effective approach to approximate the Bayesian pricing formula. There have been a limited number of studies in general insurance. For instance \cite{Cred1998} proposes to use B-Splines instead of the traditional linear credibility formula. \cite{landsman2002credibility} uses a second-order Bayes estimator to approximate the predictive mean under non-conjugate models. \cite{li2021dynamic} tackled this problem by approximating a dynamic Bayesian model via a mathematically tractable discrete Hidden Markov Model.  In contrast, in the statistical literature, there have been some works that attempt to simplify the Bayesian inference problem of obtaining accurate approximations of the posterior or predictive distribution.  Many of these ideas are based on the use of summary statistics (a dimension reduction function) and so simplify the inference problem as, in general, low dimensional quantities are much easier to work with. See e.g., \cite{sisson2018handbook}, and further references in the literature of Approximate Bayesian Computation (ABC).  Similar ideas can be used for experience rating to find a summary statistic of the policyholder's claim history that still provides the same information contained in that history. See, e.g.,  \cite{taylor1977abstract},  \cite{credstats1979}. 

In this paper, we propose a surrogate modeling approach to address the challenges of computational burden and lack of analytical expressions in Bayesian credibility models and to efficiently and transparently perform experience ratings on complex insurance datasets and large insurance portfolios. Our methodology derives an accurate analytical expression that approximates the Bayesian premiums resulting from any given Bayesian model.  As part of the methodology, we introduce a likelihood-based summary statistic of the claim history of a policyholder that provides insights on the likeliness of their claim experience according to the model.  Such a likelihood-based statistic is shown to have desirable properties along the spirit of sufficient statistics,  in the sense of performing a dimension reduction without losing information for the inference. In particular, such a statistic is sufficient for several parametric families, including the exponential dispersion family of distributions. Therefore, it can be used as the main input for the surrogate model, ensuring its accuracy in reproducing the pricing formula for non-tractable Bayesian models. The analytical expression linking the likelihood-based statistic to the Bayesian premium policy-wise allows for effective scaling of premium calculation to the portfolio level, particularly for large portfolios. In this scenario, we may compute true Bayesian premiums for a small percentage of policies, say 5\%, from the portfolio and use the analytical expression as a surrogate model to extrapolate the Bayesian premiums on the selected policies to the rest of the portfolio.  Furthermore, the analytical expression can be used for sensitivity analysis and provide some interpretation to the upgrading of premiums.

The paper is structured as follows. Section \ref{framework} provides an overview of the Bayesian framework for experience rating and highlights some of the primary challenges that will be addressed in the paper. Section \ref{surrogate} introduces the surrogate modeling approach to approximate any pricing formula and emphasizes the importance of sufficient summary statistics of the policyholder's claim history as inputs of such a model. In Section \ref{theindex}, we present the likelihood-based summary statistic, properties and interpretation for insurance ratemaking. Technical results are deferred to Appendix \ref{sufficiency}. Section \ref{estimation} explains the estimation of the surrogate model and how it simplifies the ratemaking process for large portfolios. Section \ref{simulation} describes a simulation study that demonstrates the accuracy of the pricing formula for different Bayesian models. Section \ref{realdata} presents a case study of experience rating using a real European automobile dataset, and finally, Section \ref{conclusion} concludes the paper.

\section{ The Bayesian credibility framework and issues}\label{framework}

The mathematical framework for classical  credibility, as introduced by \cite{Buhlmann2005course}, is based on the Bayesian hierarchical model:
\begin{align*} 
\mathbf{Y}_n  \vert \Theta=\theta & \sim_{iid} f(y \vert \theta, \mathcal{O}  ) \\
\Theta & \sim P( \theta  )
\end{align*} 
\noindent
where $\mathbf{Y}_n = (Y_1, \ldots, Y_n)$ and each  $Y_j, ~ i=1, \ldots, n$ are the claim history (either claim size, frequency or other variables of interest) in period $i$ of a policyholder, $f(y \vert \theta, \mathcal{O}  )$ is the density function of the \emph{model distribution}, $\mathcal{O}$ is the set of parameters that contains other information available, e.g., policyholder attributes,   $\Theta$ is the \emph{latent variable} that represents the unobservable risk of each policyholder and has a prior distribution $P(\theta)$. The variables  $Y_j$ and $\Theta$ can be either univariate or multivariate, however, we shall keep the notation as if they are univariate for reading convenience. 

This formulation includes more general hierarchical models with several layers commonly used in credibility, e.g., \cite{frees2008hierarchical}, \cite{crevecoeur2022hierarchical}. For instance,  a Bayesian model can have two (or more) layers of latent variables as follows:
\begin{align*} 
\mathbf{Y}_n  \vert \Theta_1=\theta_1 & \sim_{iid} f(y \vert \theta_1, \mathcal{O}  ) \\
\Theta_1 \vert \Theta_2=\theta_2 & \sim P( \theta_1 \vert  \theta_2  )\\
 \Theta_2 & \sim P( \theta_2  )
\end{align*} 
\noindent
Do note that this is still embedded in the general formulation above by considering  $\Theta = (\Theta_1, \Theta_2) \sim P(\theta)= P(\theta_1,\theta_2)= P(\theta_2)P( \theta_1 \vert  \theta_2  )$. 

Similarly, the classical Bayesian setup also includes the case of regression-type credibility models such as panel data models widely used in experience rating in insurance. See, e.g.,  \cite{tzougas2021multivariate}, \cite{denuit2007actuarial}, \cite{bermudez2017posteriori}, \cite{desjardins2023hierarchical}, \cite{boucher2009survey}. For instance, consider a Generalized Linear Mixed Model (GLMM) such as:
$$ \mathbf{Y}_n  \vert \Theta=\theta \sim_{iid} f(y \vert \theta, \mathcal{O}  )$$
$$ \eta(\theta; \langle \boldsymbol{x}, \boldsymbol{\beta} \rangle) =   \langle \boldsymbol{x}, \boldsymbol{\beta} \rangle +\varepsilon$$
$$ \varepsilon \sim P( \varepsilon   )$$

\noindent
where $f( \cdot )$  is the model distribution, usually a member of the exponential family, $\eta( \cdot )$ would be the so-called link function that links the latent variable $\Theta$ to a regression on covariates $\boldsymbol{x}$ with coefficients $\boldsymbol{\beta}$, and  $\varepsilon$ is a random effect incorporated in the regression. Here, we are using implicitly the set of parameters as $\mathcal{O} =\langle \boldsymbol{x}, \boldsymbol{\beta} \rangle$. To ease the reading, we will drop  $\mathcal{O}$ in the notation, and only make explicit the dependence on covariates and their parameters $\mathcal{O} $ when necessary.

Given such a framework, the goal of experience rating consists of using the claim history $\mathbf{Y}_n$  to obtain a premium for the $(n+1)-th$ period. To do so, one first requires the predictive distribution of $Y_{n+1} \vert \mathbf{Y}_n$, which after simple manipulations can be computed as follows
\begin{equation}
f(Y_{n+1} \vert \mathbf{Y}_n) =\frac{ E_{\Theta} \left[   f(Y_{n+1} \vert \Theta ) \exp ( \ell (\mathbf{Y}_n \vert \Theta ) ) \right]}{ E_{\Theta} \left[    \exp( \ell (\mathbf{Y}_n \vert \Theta ) ) \right] }
\label{eqn1}
\end{equation}

\noindent where $E_{\Theta}$ is an expectation with respect to the prior distribution of $\Theta$ and $ \ell (\mathbf{Y}_n \vert \Theta )$ is the conditional log-likelihood of the observed experience $\mathbf{Y}_n$ given by
$$
\ell (\mathbf{Y}_n \vert \Theta ) = \sum_{j=1}^n \log( f(Y_j \vert \Theta ) ).
$$ 

The predictive distribution describes the risk behavior of a policyholder, and therefore it is the only quantity needed to compute any \emph{premium principle} or \emph{risk measure} that the actuary is considering. As defined by \cite{kaas2008premium}, a premium principle (or in general a risk measure)  is an operator $\Pi$ that assigns a given risk with a non-negative value, $\Pi(Y_{n+1} \vert \mathbf{Y}_n)$. Many premium principles have been studied in the actuarial literature as described in \cite{dickson_2005}  or \cite{radtke2016handbook}. The most well-known examples are premiums based on operators defined through conditional expectations of the form $ E( \pi(Y_{n+1}) \vert \mathbf{Y}_n )$ for some weight function  $\pi(y)$. Table \ref{table0} shows some examples.

\begin{table}[!htbp]
\center
\small
\begin{tabular}{ccc}
\hline
\hline
\textbf{Premium principle} & \textbf{Weight functions} & \textbf{Premium } $\Pi(Y_{n+1} \vert \mathbf{Y}_n)$\\ \hline
\emph{Net Premium} & $\pi(y) = y$ & $ E( \pi(Y_{n+1}) \vert \mathbf{Y}_n )$ \\ 
\emph{Expected value} & $\pi(y) = (1+\alpha) y$ & $ E( \pi(Y_{n+1}) \vert \mathbf{Y}_n )$ \\ 
\emph{Mean value / Utility function} & $\pi(y) = \mathcal{U}(Y)$ & $ E( \pi(Y_{n+1}) \vert \mathbf{Y}_n )$  \\ 
\emph{Variance /Standard deviation } & $\pi_1(y) = y , \pi_2(y) = y^2$ & $ E( \pi_1(Y_{n+1}) \vert \mathbf{Y}_n )+\alpha E( \pi_2(Y_{n+1}) \vert \mathbf{Y}_n ) $ \\ 
\emph{Exponential} & $\pi(y) = \exp(\alpha y)$  & $\log(  E( \pi(Y_{n+1}) \vert \mathbf{Y}_n ) )/ \alpha$ \\ 
\emph{Esscher} & $\pi_1(y) = y\exp(\alpha y), \pi_2(y) = \exp(\alpha y)$ & $ E( \pi_1(Y_{n+1}) \vert \mathbf{Y}_n )/E( \pi_2(Y_{n+1}) \vert \mathbf{Y}_n )$ \\ \hline
\hline
\end{tabular}
\caption{Example of Premium Principles}
\label{table0}
\end{table}

Although this Bayesian framework is in theory sound, its implementation is hard from a practical viewpoint, especially when one needs to account for the claim history and policyholder's attributes in large and heterogeneous insurance portfolios. Indeed,  the computation of the posterior predictive distribution in Equation (\ref{eqn1}) and any of the conditional expectations in Table \ref{table0}  could be a challenging task under most practical considerations. It is known that analytical expressions can only be obtained under certain choices of the model distribution and the prior, which apply mostly to a limited number of simple parametric models such as the conjugate families of distributions. Moreover, most of the data-driven setups that commonly provide satisfactory fits to real insurance data sets do not belong to this limited class of models, e.g., \cite{cheung2021bayesian}, \cite{czado2012mixed}, \cite{yau2003modelling}. As a result, performing Bayesian credibility is of much higher complexity due to the mathematical intractability of premium computations.

The most common approach used for the approximation of Bayesian premiums is via Markov Chain Monte Carlo (MCMC) methods along the same lines as Bayesian inference procedures. These methods enable the actuary to draw samples from the posterior and predictive distribution in Equation (\ref{eqn1})   without the need of having an analytical expression, and so the desired expectations in Table \ref{table0}  can be obtained as the sample average of the simulated quantities. This process can be computationally expensive as it requires a considerable amount of simulation of the desired quantities before obtaining reliable estimates, and the fact that it must be performed individually for each policyholder when considering the heterogeneity of policyholders' attributes on a regression-type model. That said,  when it comes to relatively large portfolios in non-life insurance, the repeated use of the MCMC process for each policyholder becomes almost prohibitive for practical applications because of the computational burden (\cite{ahn2021ordering}).

The issues mentioned earlier become more challenging when interpretability is at play. Ratemaking in insurance is a regulated activity and as a result, premiums must be transparent in terms of calculation so that policyholders are priced fairly without discrimination \cite{lindholm2022discrimination}. However, when numerical methods are used to calculate premiums (as in the case of Bayesian premiums via MCMC), the ratemaking process becomes a ``black box" due to the lack of an analytical expression that links the policyholder's attributes and their claim history to the resulting premium. Therefore, the ratemaking process becomes unexplainable and extremely unappealing for practitioners to find the true Bayesian premium. 

Actuaries have approached this issue with the so-called credibility framework, as mostly described by \cite{Buhlmann2005course}. To introduce the main idea, note that any Bayesian formula under any premium principle can be generally expressed as:
\begin{equation}
\Pi(Y_{n+1} \vert \mathbf{Y}_n) = G_{\Pi}( \mathbf{Y}_n, n, \mathcal{O}  )
\label{eqn3}
\end{equation}
where $G_{\Pi} (\cdot)$ is the theoretical or true functional form that links the claim history of the policyholder $\mathbf{Y}_n$ and the set of model parameters $\mathcal{O}$ to the Bayesian premium, under the premium principle $\Pi$. Essentially, the function $G_{\Pi}  (\cdot)$ is the integral of the premium principal with respect to the posterior predictive distribution, which, in turn, is the integration of the posterior distribution, as shown in Equation (\ref{eqn1}). The functional form of $G_{\Pi} (\cdot)$ entirely depends on the premium principle and the underlying Bayesian model, and as we mentioned above, it likely lacks an analytical expression. 

Along those lines, the traditional credibility approach approximates the Bayesian premium via a best linear approximation. For instance, when considering the net premium principle, \cite{Buhlmann2005course} shows that the best linear approximation of the Bayesian premium takes the form:
\begin{equation}
\hat{G}^{LI}_{\Pi}(\mathbf{Y}_n, n, \mathcal{O}  ) = Z_n\frac{\sum_{j=1}^n Y_j}{n}  + (1-Z_n) \Pi(Y_{n+1})
\label{linearcred}
\end{equation}

\noindent where $Z_n $ is the so-called credibility factor, and $\Pi(Y_{n+1})=E(Y_{n+1})$ is the so-called manual premium, which is the premium determined by the underlying Bayesian model without accounting for the claim history. Note that the information associated with the claim history of the policyholder is contained only in the summary statistic $\frac{\sum_{j=1}^n Y_j}{n} $. Here we use the notation $\hat{G}^{LI}$ as this is, in general, an approximation of the true functional form $G_{\Pi}$, rather than the true relationship. \cite{jewell1974credible} showed that under certain assumptions of a model distribution coming from the exponential family, conjugate prior distribution and some other regularity conditions, such a linear approximation is exact i.e. $\hat{G}^{LI}_{\Pi} \equiv G_{\Pi}$. This is known as the case of exact credibility in the literature. Do note that this is one of the very few cases where there is an explicit expression.

Unfortunately, the credibility approach is limited to three main aspects. The first one is with respect to the premium principles on which it operates. Briefly, the credibility approach has mainly been used to approximate the net premium principle only, rather than general quantities. Although some studies have been performed to extend the linear approximation to other premium principles, e.g., \cite{de1979numerical}, \cite{gomez2008generalization}, \cite{najafabadi2010new}, \cite{xie2018extension} or \cite{zhang4170554experience}; this must be achieved on a case-by-case basis rather than in a general framework due to the different nature of premium principles. The second aspect is the difficulty of the computations. Even though the linear expression simplifies the calculation of analytical expressions, the credibility factors still require analytical calculations which might be as hard to obtain as the true Bayesian premium itself depending on the complexity of the underlying model. Lastly, the third main issue is the linearity of the formula. Premium principles often lead to non-linear expression, and so the assumption of a linear approximation is by nature suboptimal and potentially inaccurate.

Along those lines, the use of the traditional linear credibility formula does not provide an effective approach to the calculation of premiums based on the claim history of policyholders under data-driven models. Therefore it is essential to be able to find a solution that provides accurate premium estimates, reduces the computational burden and is also interpretable. In the following sections, we propose a methodology that aims to address these challenges.

\section{A surrogate modeling approach  }\label{surrogate}

In order to address the mathematical intractability and computational burden for obtaining Bayesian premiums for general Bayesian models in the context of large insurance portfolios, we propose using a tailor-made \emph{surrogate modeling} approach. Surrogate models can help reduce the effort of computing the output of a function multiple times for different input values, especially when computing such output is computationally expensive. In such cases, a surrogate model approximates the output function in a more efficient way than the original computational process via a so-called surrogate function that usually has a closed-form and is easy to evaluate. The surrogate model is trained by using some known input-output pairs that are calculated using the true generating mechanism and then used to extrapolate to new inputs to approximate their outputs. The training and extrapolation are assumed to be computationally inexpensive when compared to the real mechanism generating the outputs (see \cite{sobester2008engineering} for more on surrogate modeling). Some applications of such models have been explored in actuarial science in \cite{lin2020efficient} and \cite{lin2020fast}, and also in ratemaking in \cite{henckaerts2022stakes}. 

In our context, the target of the surrogate model is to approximate the general pricing function in Equation (\ref{eqn3}) in an inexpensive fashion. Once a fitted surrogate function, say $\hat{G}_\Pi (\cdot)$, is obtained, the experience rating process for a new policyholder is performed by direct evaluation of the surrogate function and the calculation of Bayesian premiums for large portfolios becomes now a tractable task. Moreover, the surrogate function provides an analytical expression that links the attributes and claim history of the policyholder with the resulting premium. Therefore, the ratemaking process becomes transparent as the function $\hat{G}_\Pi (\cdot)$ can be used for sensitivity analysis to interpret the upgrading of premiums and quantify the effect of the claim history and attributes. See, e.g., \cite{henckaerts2022stakes},  \cite{lin2020fast} and \cite{SoaSurrogate} for examples in insurance. 

The surrogate function approach offers a flexible and practical method for experience rating, that aligns with the same spirit that initially drove the adoption of linear credibility. In essence, one can think of the surrogate modeling approach as a means of extending the best linear approximation by accommodating potential non-linear relationships, and not limited to the net premium principle. The key distinction lies in the fact that the estimation is not carried out analytically, but rather numerically via computational methods and the data itself. Notably, the best linear approximation can be retrieved through a linear surrogate function with the appropriate parameterization, as we will elaborate on in Section \ref{interpolation}.

Unfortunately, creating a surrogate function for experience rating presents a non-trivial challenge, primarily due to two main factors: the heterogeneity of the portfolio of policyholders and the varying dimensionality inherent in policyholders' claim histories.

The first challenge, stemming from this heterogeneity, permeates various aspects of the problem. However, the primary concern for the surrogate model revolves around the multitude of inputs upon which the surrogate function will be applied. To train this surrogate model, we rely on a predetermined grid of points where both inputs and outputs are well-known. In our specific context, this involves selecting a cohort of representative policyholders and determining their respective premiums based on the underlying Bayesian model. Consequently, our primary concern is how to select these representative policyholders to ensure accurate extrapolation. Given the substantial heterogeneity within the portfolio, it is crucial that the chosen grid points adequately encompass the wide variation in inputs to prevent any biases during extrapolation.

To address this challenge, we employ techniques borrowed from population sampling  (see, for example, \cite{chambers2012introduction}), as discussed in greater detail in Section \ref{sampling}. By drawing from this representative group, we can select a modest proportion of policyholders (typically 1\% to 5\%) from the overall portfolio while still effectively capturing the overall diversity.

The second significant challenge, arising as a direct consequence of the portfolio's heterogeneity, pertains to the characterization of the function $G_{\Pi} (\cdot)$ by a large and variable dimensionality. This issue emerges because each value of the claim history $\mathbf{Y}_n$, along with all other policyholder attributes, represents an $n$-dimensional input for the function. Moreover, this dimensionality is not fixed, as the number of periods of observed claim history $n$ may differ from one policyholder to another, ranging from a single period to many. Consequently, there is considerable difficulty in defining a suitable surrogate function \cite{hou2022dimensionality}.

Several approaches have been explored in the statistical literature to address dimensionality issues, primarily focusing on reducing the complexity of high-dimensional feature spaces. These approaches are applicable in our context if we consider policyholders' attributes and claim history as features in the problem. In such cases, employing dimensional reduction techniques through non-linear transformations that summarize the data is common, as discussed in \cite{blum2013comparative}. Specifically, this entails identifying a lower-dimensional function of the variables of interest, which, in our case, includes the claim history and attributes of a policyholder. This lower-dimensional function should capture the same information contained in the entire data set but in a simpler representation. Several methods have been explored in data science and it is known from theoretical statistics that this task is optimally achieved via the so-called ``sufficient statistics," as elaborated in \cite{casella2021statistical}. 

Sufficient statistics play a key role in general statistical inference due to their optimal compression of information as well as mathematical properties.  In Bayesian inference, for instance, it can be shown that using a sufficient statistic, say $T(\mathbf{Y}_n)$, instead of the entire information $\mathbf{Y}_n$ as the basis for inference yields identical results. Mathematically, this means that the posterior predictive distribution of the next period would satisfy
\begin{equation}
f(Y_{n+1} \vert T(\mathbf{Y}_n) ) = f(Y_{n+1} \vert \mathbf{Y}_n), 
\label{posteriors}
\end{equation}

\noindent thus, for a posteriori inference, it is unnecessary to consider the entire claim history $\mathbf{Y}_n$, but only the information contained in the sufficient statistic $T(\mathbf{Y}_n)$. A formal proof of this statement can be found in  \cite{bernardo2009bayesian}, Section 4.5. 

As a result, the functional form of the function $G_{\Pi} (\cdot)$ in Equation (\ref{eqn3}) can be simplified as 
\begin{equation}
\Pi(Y_{n+1} \vert \mathbf{Y}_n)  = \tilde G_{\Pi}(T(\mathbf{Y}_n), n, \mathcal{O} ),
\label{eqn3.1}
\end{equation}

\noindent for a different, yet very similar, function $\tilde G_{\Pi}(\cdot)$ that depends on the claim history only as a function of such sufficient statistic, which is of a much lower and fixed dimension than $\mathbf{Y}_n$. As such, from a complexity point of view, this function is simpler and easier to learn by a surrogate model. Therefore, it is desirable to work with sufficient statistics instead of the whole information of the claim history directly.

To illustrate the concept, let's revisit the case related to exponential families as described in Equation (\ref{linearcred}). In this scenario, it becomes straightforward to identify a sufficient statistic and highlight the simplification of the pricing formula that ensues. To do so, note that for exponential dispersion families, the sufficient statistic is represented as  $T(\mathbf{Y}_n) = \frac{\sum_{j=1}^n Y_j}{n}$.  Consequently, the equivalent, albeit simplified, form of the function $G_{\Pi}(\cdot)$ becomes
$$
\tilde{G}_{\Pi}(T(\mathbf{Y}_n), n, \mathcal{O}  ) = Z_n T(\mathbf{Y}_n)  + (1-Z_n)\Pi(Y_{n+1}).
$$

In practice, sufficient statistics might not be known depending on the complexity of the underlying Bayesian model and the heterogeneity of the data, however, it is possible to have approximations for these quantities that are practical and lead to a similar result. Intuitively, one might argue that if a summary statistic $T(\mathbf{Y}_n)$ is not sufficient, yet it is able to summarize a large piece of the information properly, then the relationships in Equations (\ref{posteriors}) and (\ref{eqn3.1}) should hold approximately up to some extent.   This is indeed the fundamental idea behind the literature on Approximate Bayesian Computation (ABC)  to tackle issues in Bayesian inference, e.g., \cite{sisson2018handbook}, \cite{joyce2008approximately}, \cite{sunnaaker2013approximate} and \cite{fearnhead2012constructing} for further references.  In such a context, the inference is based on summary statistics rather than on the whole data, which is usually more convenient from a computational perspective.  Such summary statistics are not necessarily sufficient, yet still, provide as much information to have a similar inference as if the whole data was used. These statistics are much easier to compute than true sufficient statistics and provide reliable approximations for inference purposes (i.e. almost equality in Equation (\ref{posteriors}). 

Along those lines, similar ideas can be applied to the context of experience rating, and the construction of the surrogate function. Indeed, if we have available a summary statistic that provides a dimension reduction with not much loss of information, then one can aim to estimate the desired surrogate function, say $\hat{\tilde G}_{\Pi}(\cdot)$, based on the information of such summary statistic, and have a reliable approximation of the true premiums using Equation (\ref{eqn3.1}).  Unfortunately,  the choice of such summary statistics remains under-explored in the experience rating literature, despite some implicit reliance on the idea. See, e.g., \cite{kunsch1992robust}, where the sample median is used instead of the mean, or \cite{yan2022general}, which considers a general class of credibility formulas based on linear combinations of estimators of the mean. However, none of these approaches discuss the relevance of using such summary statistics, besides intuition, nor do they account for the heterogeneity of the portfolio. 

In general, the search for such summary statistics is an open problem in ABC and data science, e.g., \cite{sunnaaker2013approximate}. Presently, the existing proposals tend to emphasize computational aspects that may not align well with the objectives of experience rating in the insurance context. Given that the goal of the ratemaking process is to maintain transparency and comprehensibility, an ideal summary statistic should possess a clear and interpretable meaning when describing a policyholder's claim history risk behavior. Simultaneously, it should capture all the pertinent information contained within that history. Unfortunately, many of the currently proposed statistics fail to meet these criteria, rendering them less suitable for our specific applications.

In the next section, we will present a likelihood-based summary statistic for claim history, designed to be a valuable component in the surrogate model for experience rating. This summary statistic possesses several advantageous properties that are well-suited for ratemaking applications. These properties include ease of computation, interpretability, and efficient handling of insurance data.  The details of how to use this summary statistic and how to fit a surrogate model in the context of large portfolios are provided in Section \ref{estimation}.

\section{A likelihood-based summary statistic }\label{theindex}

In this section, we introduce a likelihood-based summary statistic that is carefully crafted to capture most of the information within a policyholder's claim history.  It can be interpreted as a measure associated with the likeliness of a policyholder exhibiting such claiming behavior. This summary statistic is tailor-made to align with the underlying Bayesian model, incorporates policyholder attributes, and can be employed by actuaries to enhance their understanding of the experience rating process. Here we focus on introducing the likelihood-based statistic and its properties in a more intuitive fashion and defer the more involved technical details to  Appendix \ref{sufficiency}.

As we discussed in the previous section,    we seek a summary statistic that serves as a sufficient statistic for the accurate calculation of Bayesian premiums.     In statistics, it is well-known that the likelihood, as a function of both the latent variable and the observed data, is always sufficient (Lemma 1 in \cite{mayoqualitative} or \cite{schweder2016confidence}). Therefore, it is natural to consider the likelihood function itself (or the log-likelihood) as a reasonable choice for the desired summary statistic. However, the likelihood cannot be used as a statistic since the value of the latent variable is not fixed. Despite this, the likelihood function provides enough motivation to define our candidate summary statistic as a particular section of the likelihood function, as follows:
\begin{definition}[  likelihood-based summary statistic ]
The  likelihood-based summary statistic  or just likelihood-based statistic for a policyholder is defined as the following:
\begin{align}
{\mathcal{L}} (\mathbf{Y}_n; \tilde{\theta}, \mathcal{O}  ) = \sum_{j=1}^n \log f(Y_j \vert \Theta = \tilde{\theta}, \mathcal{O})
\label{index}
\end{align}
for a value $\tilde{\theta} \in  \mathcal{R}_\Theta$ that is policyholder dependent, and that is determined in a data-driven fashion.
\end{definition}

\noindent

In this scenario, instead of considering all potential values of the latent variable within its domain $\mathcal{R}_\Theta$, we select a fraction only of specific values $\tilde{\theta}$, one for each policyholder in the portfolio. We then focus on the segment of the likelihood function that is defined exclusively through these selected values for the purpose of inference. The values $\tilde{\theta}$ are, in principle, thoughtfully determined so that the simplified representation of the likelihood function retains as much information from the whole likelihood function. As such, the values $\tilde{\theta}$ can be seen as tuning parameters aiming towards the best compression of information. 

As our primary objective is ratemaking rather than broad Bayesian inference, the selection of the values $\tilde{\theta}$ is obtained under a different, yet similar principle. To do so, we employ a data-driven approach where a flexible non-linear model is trained to establish a connection between each policyholder and their corresponding value $\tilde{\theta}$, to maximize the overall quality of approximation given by the surrogate model to the true Bayesian premiums. The latter is achieved by minimizing the square error between the true premiums and the ones provided by the surrogate model. The detail of this process is provided in Section \ref{interpolation}. That said, it is important to clarify that the $\tilde{\theta}$ values do not necessarily represent estimations of the realization of the latent variable $\Theta$ associated with each policyholder, but rather serve as calibration parameters seeking the optimal performance of the surrogate model.  Once the $\tilde{\theta}$ values are determined for each policyholder, the expression in Equation (\ref{index}) can be regarded as a summary statistic of the claim history of each individual. It is worth noting that such statistics can be applied to claim histories containing both frequency and severity data simultaneously without any additional considerations.

Our newly defined statistic exhibits desirable properties for experience rating purposes, as discussed below, and also satisfies desirable properties that guarantee the reliability of a surrogate model based on it. Indeed, in  Section \ref{sufficiency} it is shown that the posterior predictive distribution in Equation (\ref{eqn1}) can be approximated as closely as needed through a judicious selection of the values $\tilde \theta$. Moreover, it is also shown that the likelihood-based statistic serves as a sufficient statistic under certain scenarios, including cases involving members of the exponential dispersion families. 

It is worth mentioning that within the literature on Approximate Bayesian Computation, researchers have explored  other summary statistics (informally termed \emph{approximate sufficient statistics} \cite[p.~130]{sisson2018handbook}) based on the likelihood function.  For instance, \cite{alsing2018generalized} and subsequent studies utilize the ``score function" at a fixed value of $\tilde \theta \in \mathcal{R}_\Theta$ (which they term as the fiducial parameter) as a potential summary statistic. In this context, much like our approach, the fixed value of the latent variable serves as a tuning parameter with the objective of attaining the most informative representation through the summary statistic. However, it is essential to highlight that these alternative summary statistics, while valuable for computational purposes in obtaining comprehensive posterior distributions, lack a primary interpretive purpose. Moreover, they do not establish formal sufficiency properties through these heuristically motivated methods. Consequently, such approaches may not always be well-suited for applications in experience rating. \\

\noindent \textbf{Interpretation of the summary statistic: }  The resulting summary statistic is a log-likelihood of the claim experience of a policyholder, but under a probability measure defined by the associated value $\tilde{\theta}$. Thus it can be interpreted as a measure of the likeliness that a policyholder's claim history aligns with the expected behavior under parameters of the model $\mathcal{O}$ but under such probability measure.  Intuitively, if the claim behavior aligns with the model parameters, then there is no reason to change the current premium, but if these don't, then the premium must be revised. The details on how to adjust the premium based on the summary statistic will be explained in later sections.

\noindent \\
\textbf{Accounting for policyholder attributes: } The summary statistic is tailored to the model distribution, taking into account all the actuarial considerations (e.g., policyholder information, tail heaviness, dependence structure) that have been incorporated into the model. For instance, in models accounting for covariates, the set of parameters has the form $\mathcal{O} = \langle \boldsymbol{x}, \boldsymbol{\beta} \rangle $, where the regression function is introduced as another component in the summary statistic, independent of the value $\tilde \theta $. 

$$
{\mathcal{L}} (\mathbf{Y}_n; \mathcal{O} , \tilde{\theta}  ) = \sum_{j=1}^n \log f(Y_j \vert , \Theta=\tilde{\theta}, \langle \boldsymbol{x}, \boldsymbol{\beta} \rangle)
$$

As such, the summary statistics naturally incorporate the individual attributes and distributional behavior of each policyholder, in contrast to predefined fixed-form summary statistics such as the sample mean or median, which do not account for any of these specifications.

\noindent \\
\textbf{Additivity of the observed experience: } The summary statistic is an additive function of the policyholder's claim history, where each period of observed experience contributes a new term to the computation. Consequently, to update the premium based on  $n-1$ observations, $\mathbf{Y}_{n-1}$, we can use the already available summary statistic, $\mathcal{L} (\mathbf{Y}_{n-1}; \tilde{\theta} )$, and add the contribution of the new observation via the relationship $\mathcal{L} (\mathbf{Y}_n; \tilde{\theta} ) = \mathcal{L} (\mathbf{Y}_{n-1}; \tilde{\theta} ) + \log f(Y_{n} \vert \Theta =\tilde \theta)$. 

\noindent \\
\textbf{Sub-statistics for multivariate models: } In insurance applications where policies have multiple coverages, the variable of interest $Y_j$ is a vector of dimension $D$, which we shall denote as $Y_j = (Y_j^{(1)}, Y_j^{(2)}, \ldots, Y_j^{(D)} )$, with $Y_j^{(d)}$ being the $d$-component of the vector. In such cases, the distribution function $f(Y_j \vert \Theta)$ is a multivariate conditional distribution. Most of the credibility models in the literature are constructed under a conditional independence structure in which the different components of the vector are conditionally independent given the latent variable $\Theta$. See, e.g., \cite{frees2016multivariate}, \cite{tzougas2021multivariate}, \cite{denuit2019multivariate} or \cite{englund2008multivariate}. In such a case, we can write the model distribution as the product of the individual conditional distributions of the vector, that is $f(Y_j \vert \Theta) = \prod_{d=1}^D f(Y_j^{(d)}\vert \Theta)$. Therefore we could write the summary statistic for a policyholder as
$$
{\mathcal{L}} (\mathbf{Y}_n;  \tilde{\theta}  ) = \sum_{d=1}^D {\mathcal{L}}^{(d)} (\mathbf{Y}_n^{(d)}; \tilde{\theta}  )
$$
with 
\begin{align}
{\mathcal{L}}^{(d)} (\mathbf{Y}_n^{(d)};  \tilde{\theta}  ) = \sum_{j=1}^n \log f(Y_j^{(d)} \vert  \Theta=\tilde{\theta}) ~~~, d = 1 , \ldots, D.
\end{align}

 \noindent

Hence, the statistic can be broken down into $D$ sub-statistics, with each one corresponding to the claim history of a policyholder for the $d$-th component only. Each sub-statistics can then be interpreted individually in the same way as the global statistic. Thus, the likeliness of the whole claim history vector is obtained by adding up the individual sub-statistics, offering a lucid interpretation of the claim history for multivariate models.

\noindent \\
\textbf{Accounting for partially observed claim history: } 
In practice, claim information is often subject to policy limits and deductibles, which results in truncation and censoring, respectively. Therefore, the observed values $Y_j$ for these types of claims cannot be used directly in traditional summary statistics. However, the summary statistic can easily accommodate different types of partially observed claim information, making it a valuable tool for analyzing real-world datasets.

For instance, if the value $Y_j$ is a right-censored observation, its contribution to the statistic would be the term $\log( 1-F(Y_j \vert \Theta= \tilde \theta) )$ instead of $\log( f(Y_j \vert \Theta = \tilde \theta) )$, where $F$ is the cumulative distribution function associated with the model distribution $f$. The same concept can be applied to the case of truncation, or to other types of modifications.
An interesting case of censoring worth discussing is missing data, which can still be accommodated in the computation of the statistic. For example, if one of the $D$ components of $Y_j$ is entirely missing, then the contribution to the statistic can be seen as $\log( 1 ) = 0$. This is equivalent to not adding an observation to the associated sub-statistic. This property is useful in multivariate models when a policyholder has only an observed claim history in one line of business, and not in all.

\section{Estimation of the surrogate model}
\label{estimation}

In this section, we provide further details on how to fit the surrogate function $\hat{ \tilde G}_{\Pi}(\cdot)$ and the tuning of the values $\tilde{\theta}$ to develop a surrogate function approach. The key steps to perform this task are presented below, and the remainder of this section discusses how each of these steps is achieved.

\begin{enumerate}

\item Select a sub-portfolio of representative policyholders.

\item Compute Bayesian premiums using simulation/numerical schemes only on such sub-portfolio.

\item Estimate the parameter $\tilde \theta$ and the non-linear function $\hat{\tilde G}_{\Pi}( \cdot )$ using a method for interpolation, as chosen by the actuary.

\item Assess the accuracy of the fitted function and out-sample predictive behaviour.

\item Evaluate the fitted formula on the rest of the portfolio to obtain the premiums.

\end{enumerate}

\subsection{Selecting a representative sub-portfolio via population sampling}
\label{sampling}

The selection of a representative sub-portfolio is a crucial step in obtaining accurate results through the surrogate function approach. This sub-portfolio should be small enough to be computationally efficient, yet large enough to exhibit similar properties to the original portfolio, ensuring reliable extrapolation. For large portfolios, a sample size between 1\% and 10\% of the whole portfolio may suffice for this purpose. It is recommended to start with a proportion of 1\% and increase this number by 1 \% at a time if the results are unsatisfactory, as we further discussed in Section \ref{assesment}.

 Selecting representative policyholders is a well-studied problem in the statistical literature on population sampling,  and several methodologies have been developed for this purpose. See, e.g., \cite{chambers2012introduction}. Population sampling is a statistical technique used to gather representative information about a larger group, known as the population. Instead of studying the entire population, a smaller subset called a sample, is selected using methods according to a sampling design. Such a design considers equal or unequal inclusion probabilities, which determine the likelihood of each element being included in the sample.   
 
 Given the particular goal of extrapolation, it is critical for the sub-portfolio (sample) to exhibit the same characteristics as the total portfolio concerning all the inputs involved in computing the likelihood-based statistic, including the parameters of the model distributions (e.g., covariates in a regression), the number of periods with observed exposure $n$, and the claim history $\mathbf{Y}_n$. Therefore, we recommend using model-assisted sampling methods that can sample from large populations while accounting for several attributes of the sub-portfolio to match those of the entire portfolio.

In this paper, we use the \emph{cube method}, as described in \cite{tille2011ten}, which has been used in other applications of surrogate models in insurance, such as \cite{lin2020fast}.  The cube method allows for the selection of samples from a population of any size, taking into account both equal and unequal inclusion probabilities. The samples are chosen in a way that aims to achieve \emph{balance} across a specified set of attribute variables i.e. approximates equality or equilibrium in the distribution of the variables. Briefly, the cube method consists of two phases: a flight phase and a landing phase. The flight phase defines inclusion probabilities that guarantee the balance property on the selected set of attribute variables in the sub-portfolio, and the landing phase converts these probabilities to either zero or one using linear programming, resulting in an approximately balanced random sample. 

The algorithm has been implemented in several statistical packages, for instance, the function \texttt{samplecube} in the sampling package in \texttt{R}, \cite{tille2010teaching}. Therefore it is readily available for applications. We would like to remark that it is not computationally expensive as it is designed for large populations. The inputs are the data frame with the population of interest, including the attributes to keep balance in the sample, and starting point inclusion probability for each policyholder, which in our case is the sample size we want to sample from the population, e.g., 1\%. 

\subsection{Computing Bayesian premiums for the sub-portfolio }\label{importance}

The computation of Bayesian premiums can be achieved using numerical schemes based on either Markov chain Monte Carlo (MCMC) or quadrature methods. This problem is well-documented in the literature of computational Bayesian statistics. For further details, we refer the reader to works such as \cite{sisson2018handbook}. We emphasize that this step represents the bottleneck process that we aim to mitigate as much as possible. Therefore, we remind the reader to use the most efficient algorithm available to them. For the sake of completeness, we present here a tailored-made setup that is quite efficient for the computation of premiums defined through the expectation operators, as motivated in Section \ref{framework}.

We propose using a simulation approach via \emph{normalized importance sampling}, e.g.\cite{tokdar2010importance}, in which the \emph{proposal distribution} for the posterior distribution of the latent variables is chosen to be the prior distribution. Therefore, expectations from the posterior predictive distribution can be estimated as
\begin{equation}
E(\pi(Y_{n+1}) \vert  \mathbf{Y}_n) \approx \hat{E}(\pi(Y_{n+1}) \vert  \mathbf{Y}_n) := \frac{\sum_{k=1}^K   E \left[\pi(Y_{n+1}) \vert \theta_k \right] \exp ( \ell (\mathbf{Y}_n \vert \theta_k ) )}{\sum_{k=1}^K    \exp ( \ell (\mathbf{Y}_n \vert \theta_k ) )},
\end{equation}

\noindent where $\theta_k$ are iid samples drawn from the prior distribution $P(\theta)$. The value of $K$ is selected such that reasonable estimates are guaranteed.  Such an estimator possesses desirable properties including consistency, approximated normality and reduced variance, which makes it a desirable choice for our application. Moreover, unlike the traditional MCMC approach that generates samples from the posterior, this setup generates samples from the prior distribution, making it easy to perform. Additionally, the prior distribution $P(\theta)$ is the same for every policyholder in the portfolio. Therefore, a single $\theta_k$ drawn can be used across policyholder simulations, resulting in a generation of samples that scales up efficiently to large-size portfolios.

Finally, once these expectations are estimated, we are in a good position to find the associated Bayesian premium, according to the premium principle that is being considered (Table \ref{table0}). Therefore, we end up with a reliable approximation of such premiums, which we denote by $\hat{\Pi}(Y_{n+1} \vert \mathbf{Y}_n) $ from now on.

\subsection{Fitting of $\hat{ \tilde G}_{\Pi}(\cdot)$ and the values $\tilde \theta$ }
\label{interpolation}

To do this we use a non-parametric interpolation method in which the inputs are the set of parameters of the model, the likelihood-based statistic and the number of periods of claim history of each policyholder. The outputs are the Bayesian premiums for each policyholder in the sub-portfolio of representative policyholders.  It should be noted that the literature on the interpolation of general functions and surrogate modeling is extensive. See, e.g., \cite{mastroianni2008interpolation}. Thus, the following description serves as a guideline, rather than a step-by-step recipe to be followed. The key points of this process are the selection of a structure for the surrogate model, followed by its estimation via least squares.

\subsubsection*{Selecting a structure for the surrogate function}
In the literature, the most commonly used methods for fitting a surrogate function $\hat{\tilde G}_{\Pi} (\cdot)$ are regression via Gaussian processes, as described by \cite{sobester2008engineering}. However, other flexible non-linear models, such as neural networks and spline-based methods, may also be used.

While these methods provide accurate estimates, the surrogate function $\hat{\tilde G}_{\Pi} (\cdot)$ may have a non-clear structure of inputs, making it difficult to explain. To ensure transparency in the ratemaking process, it is necessary to select a particular structure for the surrogate function that allows for interpretation while also guaranteeing enough accuracy of the surrogate model. 

For example, one could choose a linear surrogate function in the same spirit as Equation (\ref{linearcred}), and recover the best linear approximation as a particular case of the surrogate function under a proper parameterization and a proper summary statistic. However, we emphasize that the simplistic linear structure might not be able to provide accurate approximations in general, especially when the premium principle in consideration is not linear in nature and the credibility model is complex.

To obtain meaningful interpretations of the ratemaking process while maintaining accuracy when interpolating, we illustrate the methodology using a surrogate function with a rating factor functional form as shown in Equation (\ref{eqn4}) below. However, do note that any other structure can be used.
\begin{equation}
\hat{ \tilde G}_\Pi(\mathcal{L} (\mathbf{Y}_n; \tilde{\theta} ),n, \mathcal{O}) =  \Pi(Y_{n+1}) \exp ( g( \mathcal{L} (\mathbf{Y}_n; \tilde{\theta} ),n))
\label{eqn4}
\end{equation}

\noindent where $\Pi(Y_{n+1})$ represents the manual premium; which is calculated using the premium principle, but under the \emph{prior} distribution (i.e., the associated premium the insurance charges if there is no claim history available), and $g(\mathcal{L}(\mathbf{Y}_n; \tilde{\theta}),n)$ is a non-linear function.

The expression given in Equation (\ref{eqn4}) resembles a process of rate upgrading, where $\Pi(Y_{n+1})$ represents the current premium the policyholder is charged,  and the term on the right-hand side, $\exp(g(\mathcal{L}(\mathbf{Y}_n; \tilde{\theta}),n))$, acts as a rating factor that adjusts the manual premium based on the claim history. Along those lines, the function $g(\mathcal{L}(\mathbf{Y}_n; \tilde{\theta}),n)$ describes how much of an adjustment should be made on the premium.  If the function $g(\mathcal{L}(\mathbf{Y}_n; \tilde{\theta}),n)$ equals 0, no adjustment is required; if $g(\mathcal{L}(\mathbf{Y}_n; \tilde{\theta}),n)>0$, the premium must be increased; and if $g(\mathcal{L}(\mathbf{Y}_n; \tilde{\theta}),n)<0$, the premium should be reduced. As a result, the ratemaking process acquires a more transparent view of the actuary, as the Bayesian premium can be interpreted and explained to both clients and regulators.  The function $g(.)$ can be non-linear to provide flexibility and should be monotonic on the likelihood-based statistic to ease the interpretations.  Section \ref{realdata} provides a clearer view of this interpretation in a case study.

\subsubsection*{Least squares estimation}

Consider a sub-portfolio of $M$ policyholders indexed by $i = 1, \ldots, M$. Let $\hat{\Pi}^p_i$ denote the Bayesian premium for the $i$-th policyholder obtained via simulation, i.e. the $\hat{\Pi}(Y_{n+1} \vert \mathbf{Y}_n)$ values. Since these values were obtained as the sample mean of a large number of simulations, we can assume that $\hat{\Pi}^p_i \approx \textrm{Normal}(\Pi_i^p, se^2_i)$, where $\Pi_i^p$ is the Bayesian premium $\Pi(Y_{n+1} \vert \mathbf{Y}_n)$ and $se_i^2$ is the standard error of the estimation. Let $\Pi_i$ denote the manual premium of the $i$-th policyholder, i.e. the premium $\Pi(Y_{n+1})$ without incorporating claim history, and let $\mathcal{L}_i(\mathbf{Y}_n,\tilde{\theta}_i)$ denote the likelihood-based at the value $\tilde{\theta}_i$ for the $i$-th policyholder in the portfolio.

Assuming the rating factor functional form, our objective is to fit the relationship in Equation (\ref{eqn4}), where we need to estimate the function $g(\cdot)$ that depends on two inputs, the likelihood-based statistic and the number of previously observed claim periods $n$. Since $g(\cdot)$ is an unstructured function, we can use a non-linear formulation for its estimation, such as basis decomposition using B-splines.  In the case that one wants to impose monotonicity on the function, one can consider shape-constrained B-splines as in \cite{pya2015shape}. 

Note that we can write such specific functional form as:
$$
\hat{\Pi}_i^p \approx \textrm{Normal}(\Pi_i^p, se^2_i)
$$
$$
\log( \Pi_i^p) = \log(\Pi_i) + g( \mathcal{L}_i(\mathbf{Y}_n,\tilde{\theta}_i), n_i ),
$$
which can be identified as a generalized regression model with Gaussian response and a log-link function, where the simulated Bayesian premiums $\hat{\Pi}_i^p$ are the response variable, $\log(\Pi_i)$ is an offset, and both $\mathcal{L}_i(\mathbf{Y}_n,\tilde{\theta}_i)$ and $n_i$ are features with a joint non-linear effect.  

To estimate the parameters of this model, it is standard to minimize the mean square error (MSE) using any statistical learning techniques such as Additive Models, Neural Networks, Tree-based methods, etc. The MSE is given by $MSE= \frac{1}{M}\sum_{i=1}^M \left( \hat{\Pi}^p_i - \Pi_i\exp( g( \mathcal{L}_i(\tilde \theta_i), n_i ) ) \right)^2$. Gradient descent methods are commonly used to solve this optimization problem, which is not complex as only two features are involved.  Note that the least squares process is not heavily reliant on the normality of the premium estimator. Nonetheless, the latter does serve as motivation for this approach in contrast to other methods. 

The values of $\tilde \theta_i$ are also determined by the algorithm. As discussed in Section \ref{theindex}, tuning these values allows the likelihood-based statistic to better capture the information of the claim history, leading to better approximation capability of the surrogate model.  Therefore, we can set them as those values that provide the best approximation to the surrogate model. A challenge is that the values of $\tilde \theta_i$  are policyholder dependent and therefore it is necessary to have a systematic approach to link a policyholder to a reasonable value $\tilde \theta_i$. As attributes of the policyholder are also policyholder-specific, one may use them as a proxy to generate an appropriate $\tilde \theta$ according to a flexible assignation rule. Therefore, we may view the $\tilde \theta$ as a function of the set of parameters of the model, which include the policyholder attributes. Mathematically, a certain structure of the form $\tilde \theta_i = h(\mathcal{O}_i)$, for some unknown function $h(\cdot)$, can be assumed.   Do note that this construction does not imply that the latent variable of the Bayesian model, i.e. $\Theta$, depends on the policyholder's covariates, only the tuning parameters will have such dependence. 

Along those lines, any flexible technique can be used to find a data-driven function $h(\cdot)$, especially as the $\tilde \theta$ values are not intended to be interpreted on their own. The estimation of $\tilde \theta_i$ must be achieved jointly with the function $g(\cdot)$ and incorporated as part of the optimization process when fitting the model for $g(\cdot)$. This can be achieved directly in a tailored algorithm specified by the user or in an iterative fashion. It may be easier to consider the latter approach to use some of the already implemented methods in software packages. The algorithm iteration stops when the interpolation error, as measured by the MSE, can no longer be decreased. Along these lines, the general scheme for an iterative estimation is shown in algorithm \ref{algo}.

\begin{algorithm}[!htbp]
\caption{Fitting of $g( \cdot) $ and $\theta_i$}\label{alg:cap}
\begin{algorithmic}
\State $MSE \gets \textrm{Tol} +1 $
\State $\tilde \theta_i  \gets \textrm{Random number}  ~ \forall i=1, \ldots, M  $ \Comment{Start with random values for  $\tilde \theta_i$}
\While{$ MSE \ge \textrm{Tol}$} 
     \State $\mathcal{L}_i(\tilde \theta_i) \gets \sum_{j=1}^n \log f(Y_{i,j} \vert \Theta = \tilde{\theta}_i)  ~ \forall i=1, \ldots, M  $ \Comment{Compute the likelihood-based statistic }
    \State $ g( \cdot ) \gets \argmin_{g} \sum_{i=1}^M \left( \hat{\Pi}^p_i - \Pi_i\exp( g( \mathcal{L}_i(\tilde \theta_i), n_i ) ) \right)^2 $ \Comment{ Fit the function $g( \cdot )$ via LS }
    \State $\tilde \theta_i^o \gets \argmin_{\tilde \theta_i} \left( \hat{\Pi}^p_i - \Pi_i\exp( g( \mathcal{L}_i(\tilde \theta_i), n_i ) ) \right)^2 ~ \forall i=1, \ldots, M
    $ \Comment{Find  pseudo observations $\tilde \theta_i$}
    \State $h(\cdot) \gets \argmin_{h} \sum_{i=1}^M \left( \tilde \theta_i^o - h(\mathcal{O}_i) \right)^2 $ \Comment{ Fit the model $h( \cdot )$ via LS}
    \State $\tilde \theta_i \gets h(\mathcal{O}_i) ~~ \forall i=1, \ldots, M  $ \Comment{Upgrade the values $\tilde \theta_i$ using the fitted values of $h( \cdot )$ }
    \State $MSE \gets  \sum_{i=1}^M \left( \hat{\Pi}^p_i - \Pi_i\exp( g( \mathcal{L}_i(\tilde \theta), n_i ) ) \right)^2  $ \Comment{Upgrade current interpolation error}
\EndWhile
\end{algorithmic}
\label{algo}
\end{algorithm}

\subsection{Assessment of the surrogate function}
\label{assesment}

Before using the fitted formula for any analysis, it is important to evaluate the accuracy of the out-of-sample predictive power. In this section, we provide an overview of this task but refer the interested reader to \cite{sobester2008engineering} for further details.

In general, the quality of the interpolation can be evaluated by the coefficient of determination $R^2=1-\frac{MSE}{MST}$, where $MST = \frac{1}{M}\sum_{i=1}^M \left( \hat{\Pi}^p_i - \bar{\hat{\Pi}}^p \right)^2$. This measure seems to be a standard approach to assessing surrogate models. The coefficient of determination is always in the interval $[0,1]$, with 0 indicating poor interpolation and 1 indicating a perfect one. Therefore the fit can be easily interpreted in terms of percentages. A rule of thumb is to have a surrogate model with $R^2 \ge 0.90$.

To further assess the interpolation accuracy of the fitted formula, one can perform standard goodness of fit checks on the sub-portfolio of policyholders, akin to those performed in any linear regression model. This involves residual analysis (e.g., scatter plots, histograms with errors concentrated around zero, lack of bias, etc) and evaluation of error metrics (e.g., high coefficient of determination, small relative errors, small MSE, etc.). If the formula does not perform satisfactorily under this assessment, the functional form of $\hat{ \tilde G}_{\Pi}(\cdot)$ must be revised and changed to a more flexible structure to improve the accuracy.

Similarly, to evaluate the out-of-sample predictive power, one can use ideas from cross-validation methods in predictive modeling. For instance, the representative sub-portfolio of policyholders can be split into train and test samples. The surrogate model is fitted only on the former set and used to predict on the latter set. If the goodness of fit metrics on the train set are similar to those on the test set, then the out-of-sample predictive power of the formula is verified. However, if these two differ significantly, in the sense that the errors in the test set are considerably larger than in the train set, then the surrogate model is overfitting and therefore not reliable for extrapolation. In this case, the sample size of representative policyholders must be increased in order to fit such a flexible model. This can be achieved iteratively by gradually increasing the sample size, say 1\% at a time, until both the in-sample and out-of-sample performance of the surrogate model is verified.

\subsection{Extrapolation}
\label{extrapolation}

After the surrogate model is evaluated, computing the Bayesian premiums for the remaining policyholders in the portfolio is a simple matter of evaluating the surrogate function for each policyholder. This approach significantly simplifies the computation workload for large portfolios, reducing it to only a minimal portion of the portfolio.

Certain surrogate models provide a way to construct confidence intervals for the prediction.  For instance, if a Gaussian Process is used, confidence bands can be constructed based on the posterior predictive distribution. However, this approach is model-dependent. There is some research in the machine learning literature for constructing confidence bands that can be adapted to surrogate modeling. For example, works such as \cite{kumar2012bootstrap} and \cite{kabir2018neural} aim to construct confidence bands in a model-agnostic fashion. Nevertheless, we note that this task is still under research and refer the reader to those papers if there is a need to construct them.

\section{Simulation study}
\label{simulation}
In this section, we illustrate via a simulation study how our newly defined likelihood-based statistic can capture the information on the claim history of a policyholder. We do so by investigating the achievable accuracy of the surrogate model at approximating the premiums constructed on different choices of the model-prior distributions and premium principles commonly observed in actuarial modeling. We separate the analysis depending on whether the variable of interest is continuous or discrete. Results are summarized in Tables \ref{tablediscrete} and \ref{tablecontinous}, in which we display the coefficient of determination $R^2$ obtained from the fitted surrogate model.

The simulation study's general setup involves generating synthetic portfolios of policyholders and their claim history from the Bayesian models listed in Tables \ref{tablediscrete} and \ref{tablecontinous}. To emulate real insurance portfolios, we create a heterogeneous portfolio of 50,000 policyholders. The mean of the model distribution for each policyholder  $\mu$  is linked to a synthetic systematic effect $\alpha$ (emulating effect of policyholder individual attributes ) and a single latent variable $\Theta$ affecting the mean of the distribution as follows:
$$
\log(\mu) = \alpha +\Theta
$$

We assume non-randomness of the dispersion parameter in the model distribution in the cases where there is such a parameter.  We note that this structure is consistent with traditional Bayesian regression models used in insurance as commented in Section \ref{framework}. The values of $\alpha$ and the dispersion parameters are chosen to resemble those that are usually obtained when fitting a model. Therefore, these mimic the typical heterogeneity present in real datasets and the usual frequency and severity patterns observed in insurance.

To generate the claim history of each policyholder, we utilize the resulting value $\mu$ obtained from the previous step along with the chosen model distribution. In order to maintain simplicity, we consider a scenario where each policyholder has a claim history spanning $n=5$ periods.

We compute the Bayesian premiums via the importance sampling algorithm discussed in Section \ref{importance}, utilizing 20,000 samples for each policyholder so that the error of estimation through simulation is negligible. As our aim in this section is to evaluate the predictive power of the surrogate model, we fit $\hat{\tilde G}(\cdot)$ using an unstructured function that is estimated through multidimensional B-splines, which can be achieved via a generalized additive model (GAM) implementation.
 
To specify the selection of the values $\tilde \theta$ which vary from one policyholder to another, we utilize a \emph{random forest} structure for the function $h(\cdot)$ that establishes a connection between the values of $\tilde \theta$ and both the parameters of the model distribution and the claim history of policyholders. Specifically, we employ the systematic component $\alpha$ values of each policyholder as the input features for the random forest, such that $\tilde \theta = h(\alpha)$. 
We then proceed with the implementation of the estimation via Algorithm \ref{algo}. Notably, the computational cost per iteration proved to be nearly insignificant, thanks to the efficient fitting of the Generalized Additive Model and the Random Forests. This efficiency was largely due to the relatively modest size of the set of representative policyholders and the low dimensionality of the features in the surrogate model as accomplished via the likelihood-based statistic.

During our simulations, we found that the algorithm rapidly achieved a desirable level of interpolation within just a few iterations, well before converging to the results detailed below. It is worth mentioning that once the values of $\tilde \theta$ approached a range relatively close to convergence, the surrogate model consistently produced similar values for the coefficient $R^2$ iteration after iteration. This suggests that the convergence of the function $g(\cdot)$ appeared to occur more swiftly than that of the function $h(\cdot)$. Consequently, as $\tilde \theta$ values neared a region sufficiently close to the optimum, their impact on the outcome diminished. This behavior can be attributed to the fact that once we are in proximity to the optimal selection of $\tilde \theta$ values, the additional information gained from the summary statistic becomes less substantial in terms of improving the model fit. Hence, it is worthwhile to consider the possibility of early termination of the algorithm if the achieved results at a certain point are already highly satisfactory.

We then proceed to evaluate the precision of the surrogate function by comparing the premiums estimated by the model with the Bayesian premiums, using the coefficient of determination $R^2$. The results are presented in Tables \ref{tablediscrete} and \ref{tablecontinous}.

\begin{table}[!htbp]
\centering
\begin{tabular}{ccccc}
\hline \hline
\multicolumn{1}{l}{\textit{\textbf{}}} & \multicolumn{1}{l}{\textbf{}} & \multicolumn{3}{c}{Premium Principle} \\ \cline{3-5}
Model &  & Expected Value & Standard deviation & Exponential \\ \hline 
Poisson-Gamma &  & 99.63\% & 99.63\% & 99.63\% \\
Poisson-LogNormal &  & 99.89\% & 99.89\% & 99.89\% \\
NegBinom-InvGaussian &  & 99.59\% & 99.64\% & 99.99\% \\
Logarithmic-Weibull &  & 99.84\% & 99.85\% & 99.84\% \\
GammaCount-Weibull &  & 97.07\% & 96.98\% & 97.08\% \\
Gen-Poisson-Lognormal &  & 99.95\% & 99.96\% & 99.95\% \\ \hline \hline
\end{tabular}
\caption{Coefficient of determination $R^2$  of the surrogate model on the entire portfolio for simulation study on discrete distributions}
\label{tablediscrete}
\end{table}

\begin{table}[!htbp]
\centering
\begin{tabular}{cccc}
\hline \hline
\multicolumn{1}{l}{\textit{\textbf{}}} & \multicolumn{1}{l}{\textbf{}} & \multicolumn{2}{c}{Premium Principle} \\ \cline{3-4}
Model &  & Expected Value & Standard deviation  \\ \hline 
Gamma-Gamma &  & 96.11 \% & 96.12\% \\
Lognormal-InvGaussian &  & 94.52\% & 94.52\% \\
LogLogistic-Lognormal &  & 99.83\% & 99.83\% \\
InvGaussian-Weibull &  & 99.88\% & 99.90\% \\
Pareto-Gamma &  & 99.92\% & 99.92\% \\
Burr-Lognormal &  & 99.67\% & 99.67\% \\
\hline \hline
\end{tabular}
\caption{Coefficient of determination $R^2$  of the surrogate model  on the entire portfolio for simulation study on continuous distributions}
\label{tablecontinous}
\end{table}

Our results demonstrate that the $R^2$ values are close to 99\% in almost all scenarios, indicating that the fitted premiums closely resemble the Bayesian premiums, regardless of the selected model distributions and premium principles. We observed in further simulations that the performance of the fitted formula fluctuates more when the portfolio exhibits different levels of heterogeneity with respect to the policyholder attributes and claim history. However, the surrogate model was still able to produce similar results as the ones above even in such scenarios. Some of the distributions in the study are not part of the exponential dispersion family, yet however, the surrogate model showed a desirable performance at reproducing such premiums. Therefore we conclude that the likelihood-based statistic can effectively summarize almost all of the relevant information regarding a policyholder's claim history, enabling the creation of an appropriate surrogate model.

\section{Numerical illustration with real data }\label{realdata}

In this section, we demonstrate the use of the surrogate model on a real data set obtained from a European automobile insurance company. The data set consists of policyholders' contract information spanning from January 2007 to December 2017, with claim frequencies from two business lines: Third Party Liability insurance (TPL) and Physical Damage (PD). The data set contains detailed information on the policyholder or their automobiles, such as car weight, engine displacement, engine power, fuel type (gasoline or diesel), car age, and age of the policyholder.

The contract of interest for this application is a policy with two coverages: TPL and PD. The number of claims in both lines may be dependent, as the same car accident can lead to claims in both lines of business. It should be noted that not all policyholders have a fully observed history in the two lines. Policyholders may initially start with a contract on one policy and then upgrade to obtain a policy that covers the two of them, or vice versa (see Figure \ref{descriptive}). Therefore, the claim history for some policyholders can be considered partially observed in the sense that the number of claims for a particular line may not be available for certain periods.

Table \ref{summary} and Figure \ref{descriptive} present key summary statistics on the dataset analyzed in this study. The number of claims in both lines of business exhibits a behavior consistent with established patterns in insurance, namely, over-dispersion, a high frequency of zero claims, and a significant correlation between the two lines of business. The large portfolio is composed of 184,848 policyholders, among which only 41,956 have fully observed exposure in both lines of business simultaneously.

\begin{table}[!ht]
\begin{tabular}{cccccc}
%\cline{3-6}
\hline
\hline
 &  & \multicolumn{4}{c}{\textbf{Claim Frequency}} \\ 
 \cline{3-6}
\multicolumn{1}{c}{\textbf{TYPE}} & \textbf{Number of Policyholders} & \multicolumn{1}{c}{\textbf{Mean}} & \multicolumn{1}{c}{\textbf{Median}} & \multicolumn{1}{c}{\textbf{Variance}} & \textbf{Correlation} \\ \hline
\multicolumn{1}{c}{\textbf{TPL}} & 138,923 & \multicolumn{1}{c}{0.038} & \multicolumn{1}{c}{0.000} & \multicolumn{1}{c}{0.044} & \multicolumn{1}{c}{0.245} \\ 
\multicolumn{1}{c}{\textbf{PD}} & 87,517 & \multicolumn{1}{c}{0.306} & \multicolumn{1}{c}{0.000} & \multicolumn{1}{c}{0.512} &  \\ 
\multicolumn{1}{c}{\textbf{TOTAL}} & 184,484 & \multicolumn{1}{c}{0.162} & \multicolumn{1}{c}{0.000} & \multicolumn{1}{c}{0.279} &  \\ \hline
\hline
\end{tabular}
\caption{Summary statistics of the dataset}
\label{summary}
\end{table}

As depicted in Figure \ref{descriptive}, a significant proportion of policyholders renew their contract with the company, enabling the observation of long claim histories, with some policyholders having up to seven years of exposure. Therefore, it is essential to account for such observed experiences in the ratemaking process.

\begin{figure}[!h]
     \centering
     \begin{subfigure}[b]{0.45\textwidth}
         \centering
         \includegraphics[width=\textwidth]{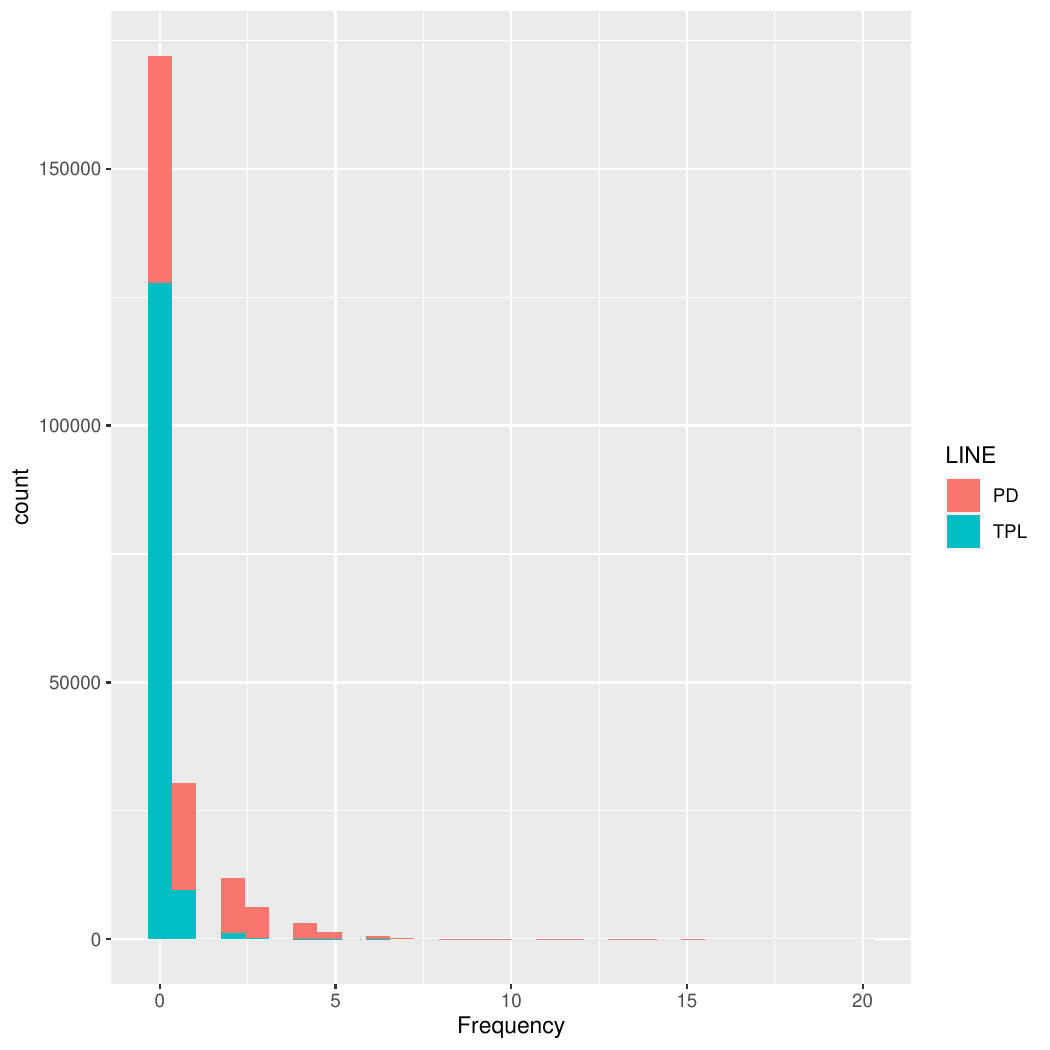}
         %\caption{Distribution number of claims per year}
    \end{subfigure}
     \hfill
     \begin{subfigure}[b]{0.45\textwidth}
         \centering
         \includegraphics[width=\textwidth]{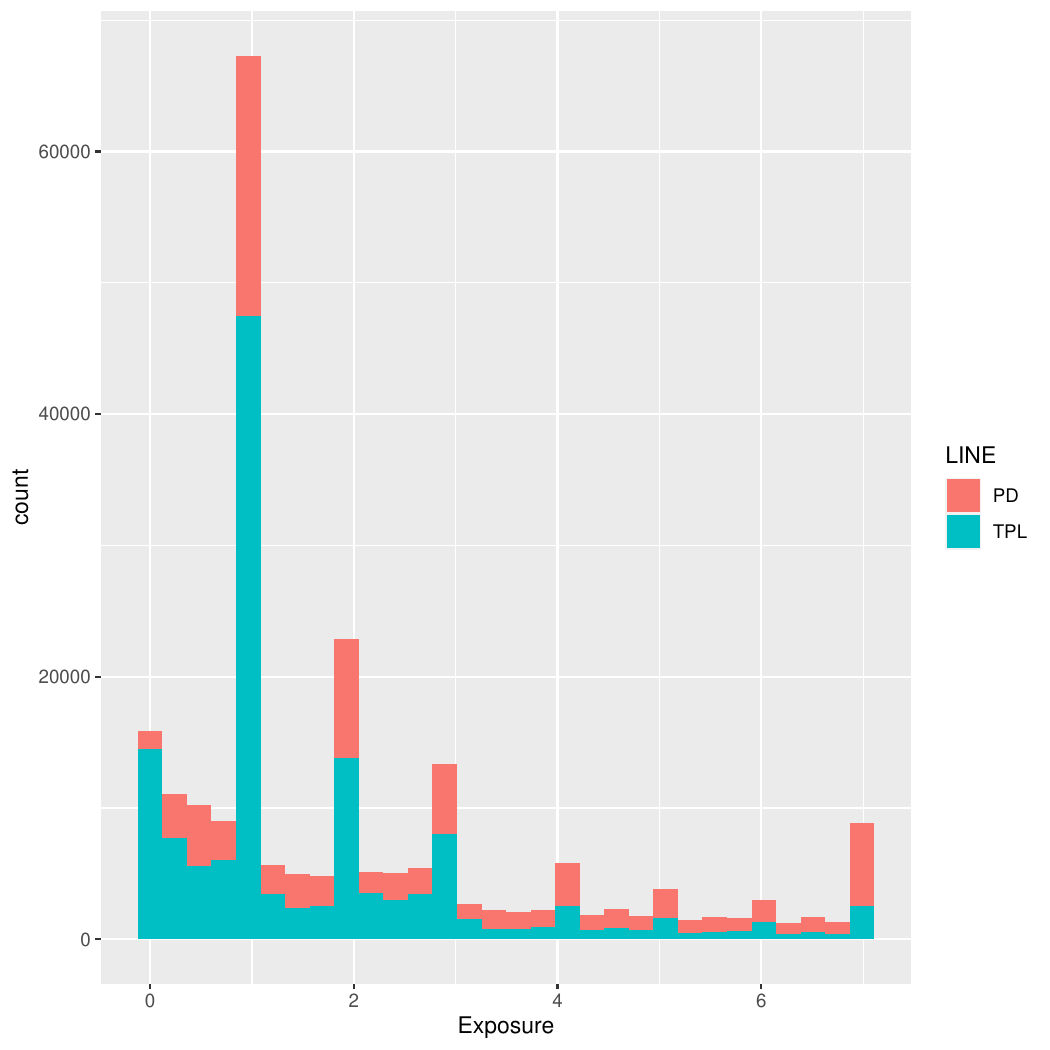}
         %\caption{Distribution of the exposure of a policyholder (in years)}
         %\label{exposure}
     \end{subfigure}
            \caption{Distribution of claim frequency (left) and exposure (right) per claim type}
       \label{descriptive}
\end{figure}

Given the aforementioned observations, the insurance company intends to conduct an experience rating analysis of the claims experience of policyholders at the time of contract renewal. Specifically, we consider the application of the \emph{Exponential} premium principle as an example, with a 5\% surcharge, which is the method employed by the insurance company to establish actuarial premiums.

\subsection{The Bayesian model and quantities of interest}

We illustrate the use of the surrogate model with a bivariate mixed negative-binomial regression model proposed by \cite{tzougas2021multivariate}, which is very flexible and can capture over-dispersion and dependent frequencies.

Let $Y_j^{(d)}$ be the number of claims from a given policyholder in year $j$, associated with the $d$th line of business, with $d=1$ being PD and $d=2$ being TPL. Let $x$ be the information of covariates, and let $\beta^{(d)}$ be the associated vector of regression coefficients for the $d$-th coverage. Similarly, let's denote with $\omega^{(d)}$  the time exposure of the contract for each coverage. We consider the hierarchical model
$$
Y_j = \left( \begin{matrix}
    Y_j^{(1)}\\
    Y_j^{(2)}
\end{matrix} \right)   \sim_{iid} f(y \vert \Theta, \langle \boldsymbol{x}, \boldsymbol{\beta} \rangle) = \textrm{NegBinom}(y^{(1)} ; \mu^{(1)}  \Theta , r^{(1)} )*\textrm{NegBinom}(y^{(2)} ; \mu^{(2)}  \Theta , r^{(2)} )
$$
where, for $d=1,2$
\small
$$
\log \mu^{(d)} = \log \omega^{(d)}+ \beta_0^{(d)}+\beta_1^{(d)} \textrm{CarWeight}+\beta_2^{(d)}\textrm{EngineDisplace}+\beta_3^{(d)}\textrm{CarAge}+\beta_4^{(d)}\textrm{Age}+\beta_5^{(d)}\textrm{EnginePower}+\beta_6^{(d)}\textrm{Fuel}
$$
\normalsize

\noindent and $\Theta \sim P(\theta)  = \textrm{InvGauss}(1, \sigma^2)$. We use the notation $\textrm{NegBinom}(y; \mu, r)$ to denote the probability mass function of a negative binomial with mean $\mu$ and dispersion $r$. Similarly, $\textrm{InvGauss}(1, \sigma^2)$ denotes an Inverse-Gaussian distribution with mean 1 and variance $\sigma^2$.

It is worth noting that the proposed model employs a shared latent variable for each line of business. This simplifies the task of working with partially observed data since it enables the estimation of $\Theta$ for both business lines even when a policyholder has observations in only one of them.

We emphasize that the model lacks analytical expressions for both the posterior and the predictive distribution. As such, numerical methods are necessary to obtain any desired quantity of interest, as discussed in detail in \cite{tzougas2021multivariate}. The estimation of the model parameters is conducted in R using generalized linear mixed models with a Negative-Binomial response. The resulting fitted parameters are presented below:

\begin{table}[!htbp]
\footnotesize
\begin{tabular}{cccccccccc}
\hline
\hline 
\textbf{Variable}         & \textbf{$\beta_0$} & \textbf{$\beta_1$} & \textbf{$\beta_2$} & \textbf{$\beta_3$} & \textbf{$\beta_4$} & \textbf{$\beta_5$} & \textbf{$\beta_6$} & \textbf{$r$} & \textbf{$\sigma^2$} \\ \hline
$Y^{(1)}$    &  -2.78 & $-4.48*10^{-5}$ & $2.69*10^{-5}$ & $-2.94*10^{-2}$ & $-4.20*10^{-3}$ & $2.24*10^{-3}$ & $-8.23*10^{-2}$ & 0.86 & 0.58 \\ 
$Y^{(2)}$  & -1.397 & $14.19*10^{-5}$ & $6.10*10^{-5}$ & $-0.20*10^{-2}$            & $-7.94*10^{-3}$  & $6.06*10^{-3}$ & $-22.08*10^{-2}$ &  0.86 & 0.58  \\ \hline
\hline
\end{tabular}
         \caption{Estimated Parameters of the Bivariate mixed  NegBinomial regression model }
\end{table}

\subsection{ Estimation of the surrogate model}

Here we proceed to calculate the Bayesian premium using the exponential principle
$$
\Pi(Y_{n+1} \vert \mathbf{Y_n})  = \frac{1}{0.05} \log( E \left( \exp(0.05*(Y_{n+1}^{(1)}+Y_{n+1}^{(2)})) \vert \mathbf{Y_n} \right)).
$$ 

To handle the large size of the portfolio, we employ a surrogate model. Specifically, we utilize the \emph{cube method} (implemented via the \texttt{samplecube()} function in \texttt{R}) to extract a sub-portfolio comprising approximately 5\% of the overall portfolio, which amounts to roughly 9,224 policyholders. To ensure a balanced representation, we match the sub-portfolio with the overall portfolio in terms of the average number of claims for PD and TPL, which capture the claim history, as well as the average fitted values of $\mu^{(1)}$ and $\mu^{(2)}$, which reflect the policyholder's attributes. The computational cost of this process is minimal as can be appreciated in Table \ref{times}.

We use the importance sampling method described in Section \ref{importance} to estimate premiums, with 50,000 samples to ensure the accuracy of these values. Note that this step is only necessary for the sub-portfolio of $5\%$ policyholders, however, we perform the computationally expensive simulation for the entire portfolio to enable a comparison of the premiums obtained from the surrogate model. We would like to highlight in terms of the simulation time that a single replication on the entire portfolio takes approximately 18 times longer than a replication on the representative policyholder, as illustrated in Table \ref{times} below. This is close to the empirical ratio of 20 associated with the proportion of 5\% vs 100\% of the representative portfolio. The simulation took place in a large server with 32 cores, 125GB RAM memory and CPU 2 x Intel E5-2683 v4 Broadwell @ 2.1GHz. 

\begin{table}[H]

\centering
        \begin{tabular}{ccccc}
        \hline
        \hline 
        \textbf{Process} & \textbf{Total Portfolio} & \textbf{Representative Policyholders}   \\ \hline
        Selecting sample & - & 32.94 \\
        Simulation for the premium & 919,008.00 & 50,976.00 \\
        Fitting surrogate function & -  & 2,480.22 \\
Extrapolation & -  & 3.39 \\ \hline
Total Time  & 919,008.00 ($\approx$ 255 h) & 53,492.55 ($\approx$ 15 h)         \\ \hline
        \hline
      
        \end{tabular}

 \caption{Comparison of the CPU Time (in seconds) required for the calculation of premiums }
 \label{times}
 
\end{table}

As previously discussed, the likelihood-based statistic can be subdivided into sub-statistics that represent summary statistics, with one for each business line in the model. In this case, we can identify two likelihood-based sub-statistics, denoted by $d=1,2$, and defined as follows:
$$
\mathcal{L}^{(d)} (\mathbf{Y}_n; \tilde \theta ) = \sum_{j=1}^n  \log \left( \textrm{NegBinom} \left( Y_j^{(d)}; \mu^{(d)} \exp( \tilde \theta ), r^{(d)} \right)\right).
$$

Regarding the surrogate function, we emphasize the rating factor functional as described in Equation (\ref{eqn4}) to achieve a transparent ratemaking process. Nonetheless, we also explore the use of an unstructured surrogate function via multidimensional B-Splines for comparison purposes. Since the model under consideration is multivariate, with $Y_j=(Y_j^{(1)}, \ldots, Y_j^{(D)})$, we can further decompose the function $g(\cdot)$ in Equation (\ref{eqn4}) to measure the effect of the likelihood-based sub-statistics separately. Thus, it is possible to quantify the importance of the claim history of each component on the resulting Bayesian premium.

For instance, if we opt for an additive structure for the function $g(\mathcal{L}(\mathbf{Y}_n; \tilde{\theta}),n)$, our surrogate function can be expressed as:
$$
\hat{\tilde G}_\Pi(\mathcal{L} (\mathbf{Y}_n; \tilde{\theta} ),n, \mathcal{O}) = \Pi(Y_{n+1})  \exp( g_1(\mathcal{L}^{(1)} (\mathbf{Y}_n; \tilde{\theta} ))+g_2(\mathcal{L}^{(2)} (\mathbf{Y}_n; \tilde{\theta} ))+g_3(n^{(1)}) +g_4(n^{(2)}) ) +c . 
$$
where $ \Pi(Y_{n+1}) $ is the manual premium or current premium, which in this case is given by the expression  $\Pi(Y_{n+1})  = \frac{1}{0.05} \log( E \left( \exp(0.05*(Y_{n+1}^{(1)}+Y_{n+1}^{(2)}))  \right))$, and the $g_j(\cdot)$ are non-linear functions that depend only on the argument given. 

In this setup, the function $g_1(\cdot)$  provides a measure of the effects of the past experience in the number of PD claims in the Bayesian premium, $g_2(\cdot)$ provides a measure of the effects of the TPL number of claims in the Bayesian premium, and $g_3(\cdot)$ and $g_4(\cdot)$ provide the effect of the number of periods for which the policyholder has been observed for each line of business. It should be noted that some policyholders may have only partially observed information, meaning that their claim history may differ for each line of business. Therefore, separate effects for $n^{(1)}$ and $n^{(2)}$ are considered.   Lastly, we added the intercept $c$ to the surrogate function, which can be thought of as an overall macro-adjustment to premiums in the portfolio.

To estimate the functions $g_k(\cdot)$ and the tuning parameters $\tilde \theta$ for each individual, only $5\%$ of the policyholders are used, as described previously. We use a B-Slines representation for the functions $g_j( \cdot)$,  and the fit is accomplished via a generalized additive model (GAM) with Gaussian response and with the additive structure above.  The values of $\tilde \theta$ are estimated using a random forest with $\mu^{(1)}$ and $\mu^{(2)}$, which depend on the policyholder attributes, as features.  Once again, recall we use the policyholder attributes as a proxy to define policyholder-specific values systematically. 

Such estimation of the functions $g_k(\cdot)$ and the values $\tilde \theta$ is performed iteratively, as described in Section \ref{estimation}.  The behavior of the algorithm is analogous to the one described in the simulation study in  Section \ref{simulation}. It is worth mentioning that both the GAM and random forest models exhibit flexibility and require minimal computational resources for training, making them suitable for surrogate modelling. In fact, the entire process of fitting the surrogate model is completed in a minimal fraction of time when compared to the approach using simulation for the whole portfolio as can be observed in Table \ref{times}.  

Lastly, it is observed that the total computational time required to calculate the premiums using the two approaches differs significantly. The approach employing representative policyholders and the surrogate function takes 17 times less time than the direct simulation approach. Furthermore, it is important to highlight the substantial difference in computational resources required for both methods. The direct simulation approach demands higher-than-average computational power, including more cores and RAM memory, while the surrogate model approach can be effectively executed on a personal computer.

\subsection{Results and interpretation}
The distribution of the estimated values $\tilde{\theta}_i$ is depicted on the left-hand side of Figure \ref{thetas2}. It is evident that there exists significant variability in these parameters, indicating a bimodal pattern attributed to the heterogeneity of the portfolio. This observation is reinforced by the figure on the right, illustrating the distributions of both manual premiums and Bayesian premiums, which also display a comparable bimodal trend.

 \begin{figure}[!htbp]
     \centering
     \begin{subfigure}[b]{0.45\textwidth}
         \centering
         \includegraphics[width=\textwidth]{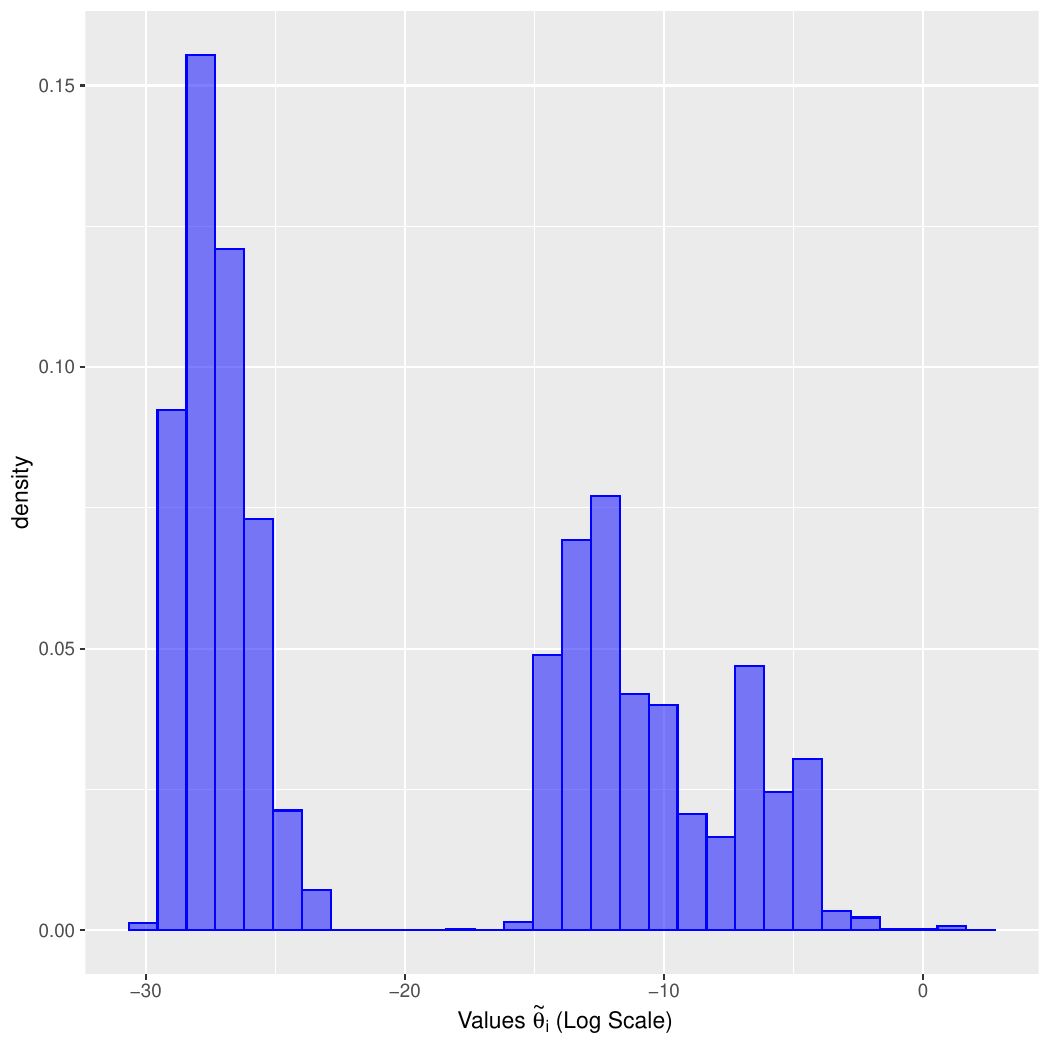}
         %\caption{Distribution of $\tilde \theta_i$}
          
     \end{subfigure}
     \hfill
     \begin{subfigure}[b]{0.45\textwidth}
         \centering
        \includegraphics[width=\textwidth]{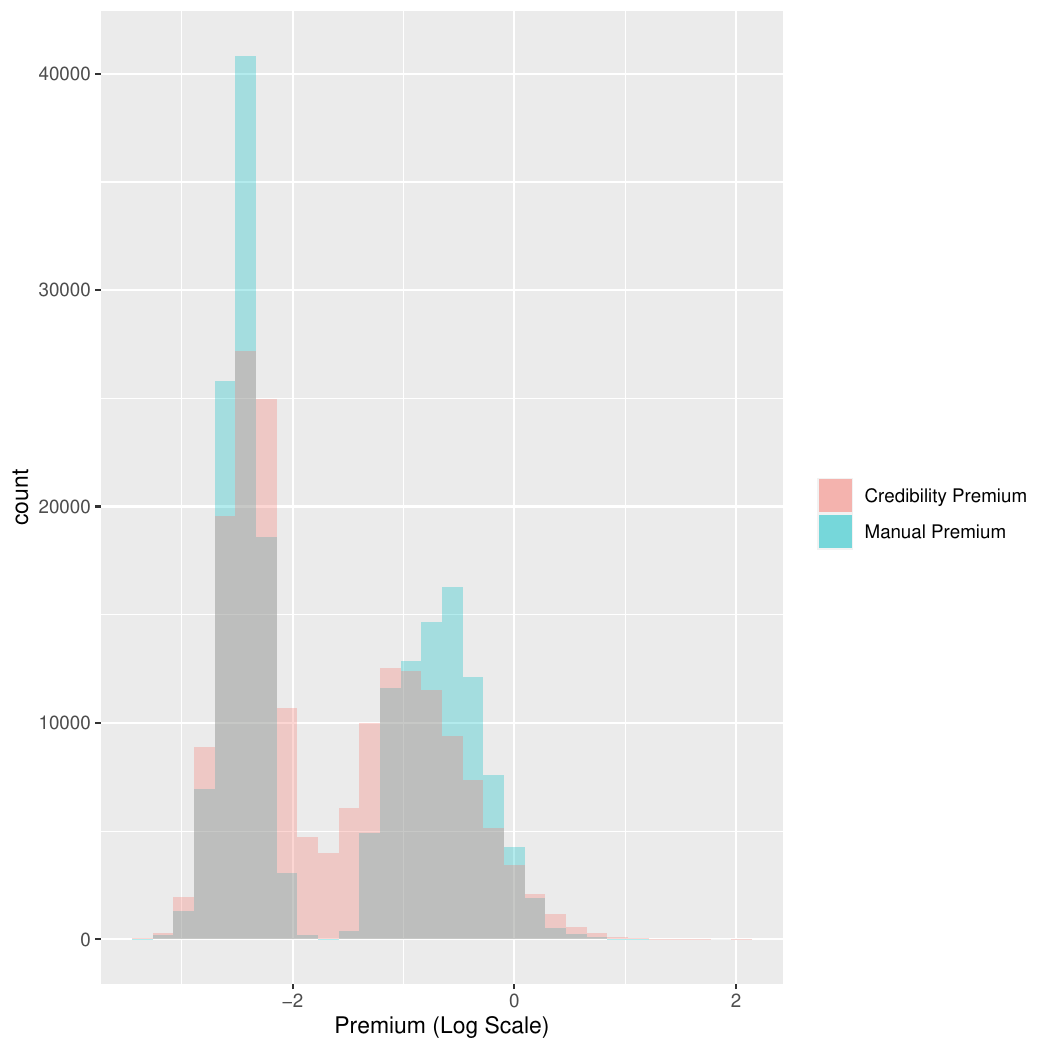}
         %\caption{$\tilde \theta_i$ vs  manual premiums (log scale)}
          
     \end{subfigure} 
     
        \caption{  Histogram of the estimated values of $\tilde \theta_i$ (Left) and Histogram of the premiums (Right) }
        \label{thetas2}
\end{figure}

Figure \ref{distindex} displays the distribution of the likelihood-based sub-statistics among policyholders, which exhibits a negative exponential-like behavior. Notably, the peak of the distribution is observed around 0 and coincides with those policyholders having no claims, which are the vast majority. In contrast, policyholders who have made claims fall within the body of the distribution, and those with the highest number of claims usually fall in the left tail.

It is important to acknowledge that the relationship between the value of the likelihood-based statistic and the number of claims is not one-to-one due to the influence of policyholder attributes on the definition of the summary statistic itself. However, we can interpret values of the likelihood-based statistic close to 0 as generally associated with ``low-risk" policyholders, while significantly negative values are indicative of ``risky" policyholders. Therefore, a distribution of the summary statistic concentrated around 0 suggests a low-risk portfolio, whereas a distribution with elongated left tails indicates a risky portfolio that requires careful attention.

 \begin{figure}[!htbp]
     \centering
     \begin{subfigure}[b]{0.45\textwidth}
         \centering
         \includegraphics[width=\textwidth]{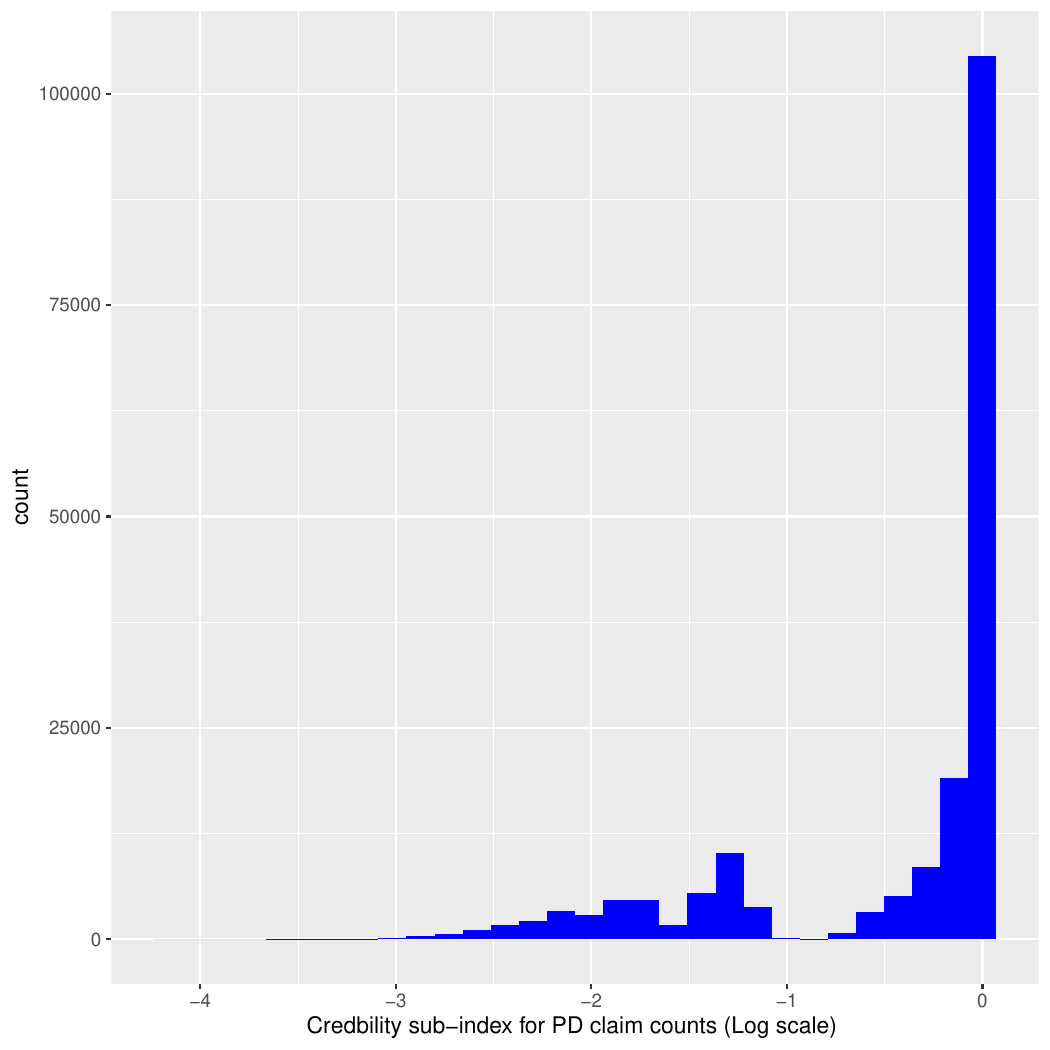}
         %\caption{Distribution for PD claims}
          
     \end{subfigure}
     \hfill
     \begin{subfigure}[b]{0.45\textwidth}
         \centering
        \includegraphics[width=\textwidth]{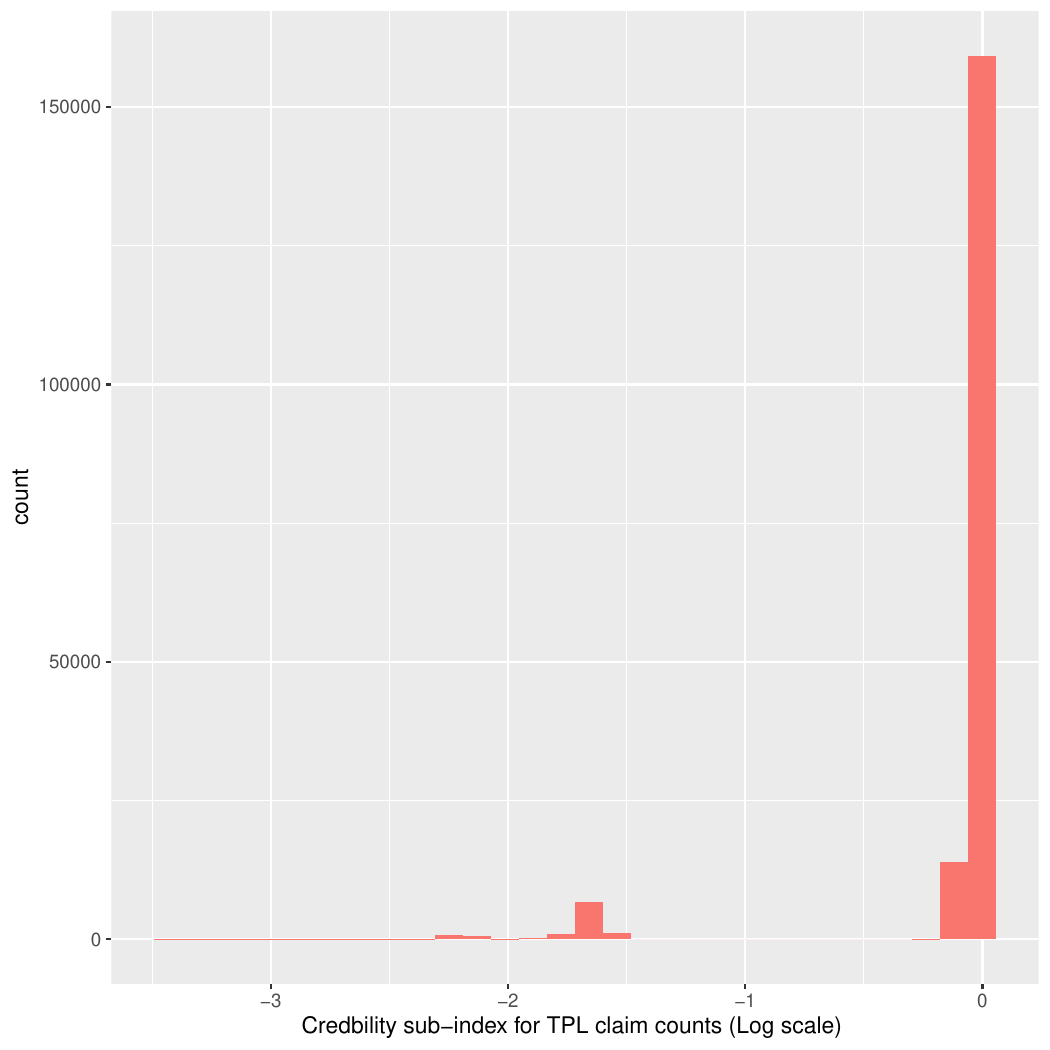}
       %\caption{Distribution for TPL claims}
          
     \end{subfigure} 
        \caption{Distribution of the likelihood-based sub-statistics. Left is for PD number of claims and Right for TPL number of claims}
        \label{distindex}
\end{figure}

We now proceed to show the fitted surrogate function.  We obtained a fitted value $c=-0.0093$, which indicates an overall reduction in the premiums on the portfolio, although almost insignificant in magnitude. Moreover, the estimated functions $g_k( \cdot )$ are presented in Figure \ref{Estimatedparameters2}. The results indicate that the greater the likelihood-based sub-statistics, the less the estimated Bayesian premium for a specific policyholder. This interpretation is consistent with the one we obtained in Figure \ref{distindex}.

We have that the function $\exp(g_1(\cdot))$ exhibits a null-effect at a value of $\mathcal{L}^{(1)} ( \mathbf{Y}_n; \tilde{\theta} )$ around -13, whereas for the function $\exp(g_2(\cdot))$, the null-effect is observed at around -15 for a value of $\mathcal{L}^{(2)} ( \mathbf{Y}_n;\tilde{\theta})$. Thus, policyholders with likelihood-based sub-statistics less than -13 (for $\mathcal{L}^{(1)} ( \mathbf{Y}_n; \tilde{\theta})$) and less than -15 (for $\mathcal{L}^{(2)} ( \mathbf{Y}_n;\tilde{\theta}$) tend to have a higher than expected number of PD claims and TPL claims, respectively. Consequently, they require a revised premium that is larger than the manual premium.  Similarly, policyholders with values of likelihood-based sub-statistics greater than -13 (for $\mathcal{L}^{(1)} ( \mathbf{Y}_n;\tilde{\theta})$) and greater than -15 (for $\mathcal{L}^{(2)} (\mathbf{Y}_n; \tilde{\theta})$) tend to have a lower than expected number of PD claims and TPL claims, respectively, and may receive a revised premium that is lower than the manual premium. 

The range of the effect is larger for the function $\exp(g_2(\cdot))$ (i.e., from 0.3 to 1.6) than for the function $\exp(g_1(\cdot))$ (i.e., from 0.5 to 1.4), indicating that the experience on the number of TPL  claims may produce a larger change on the Bayesian premiums than the experience in the number of PD claims. This result is intuitive as TPL claims have a significantly lower frequency than PD claims (see Table \ref{summary}), and thus, having a TPL claim would have a more substantial impact on the Bayesian premium than a PD claim.  Nevertheless, the effect on the premium depends on the specific combination of claim experience of the policyholder.  

The two plots at the bottom of Figure \ref{Estimatedparameters2} depict the effect of the number of periods of exposure on the resulting Bayesian premium. While the impact of these quantities is not directly interpretable for the insurance company,   these are required as part of the predictive model to distinguish short and long claim histories. 

 \begin{figure}[!htbp]
     \begin{subfigure}[b]{0.45\textwidth}
         \centering
         \includegraphics[width=\textwidth]{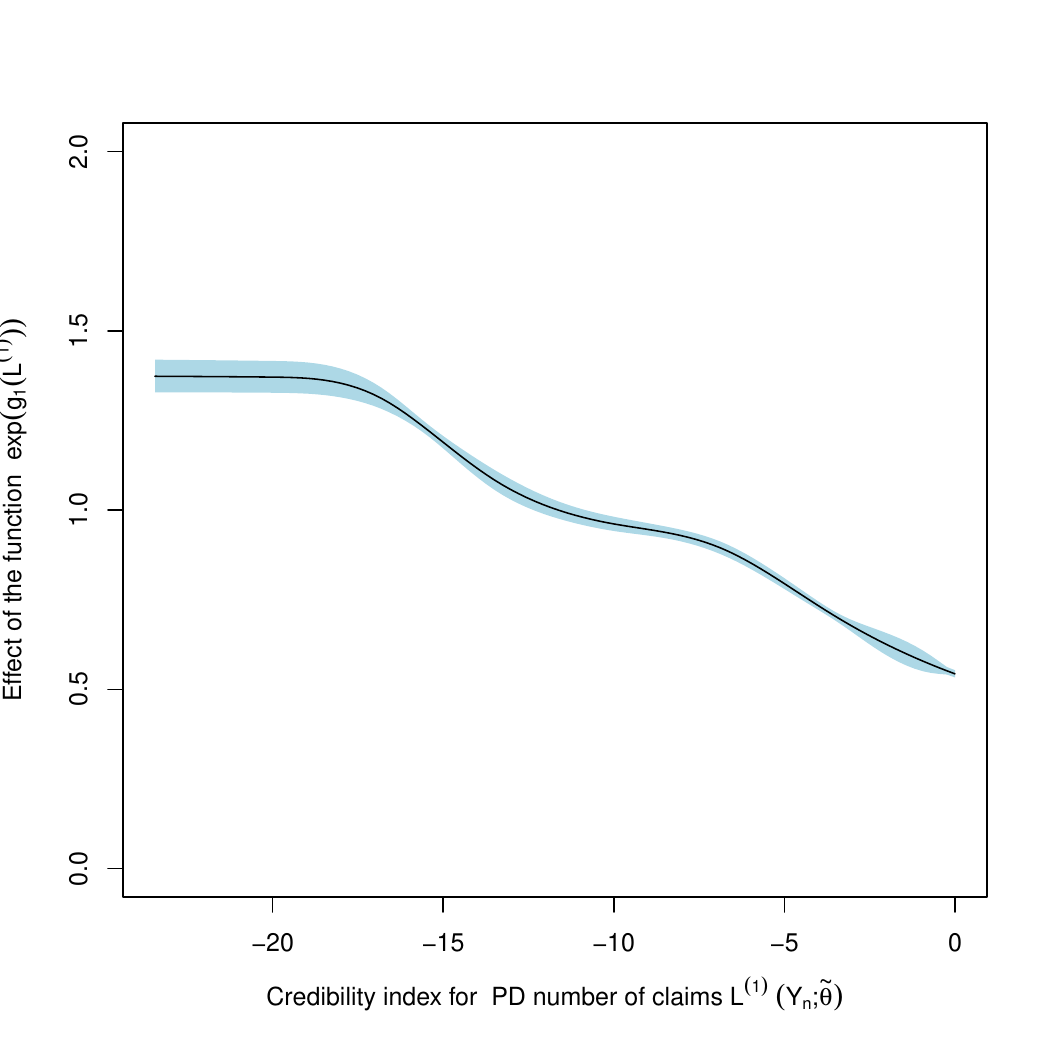}
         %\caption{Estimated function $g_1$}
          
     \end{subfigure}
     \hfill
     \begin{subfigure}[b]{0.45\textwidth}
         \centering
         \includegraphics[width=\textwidth]{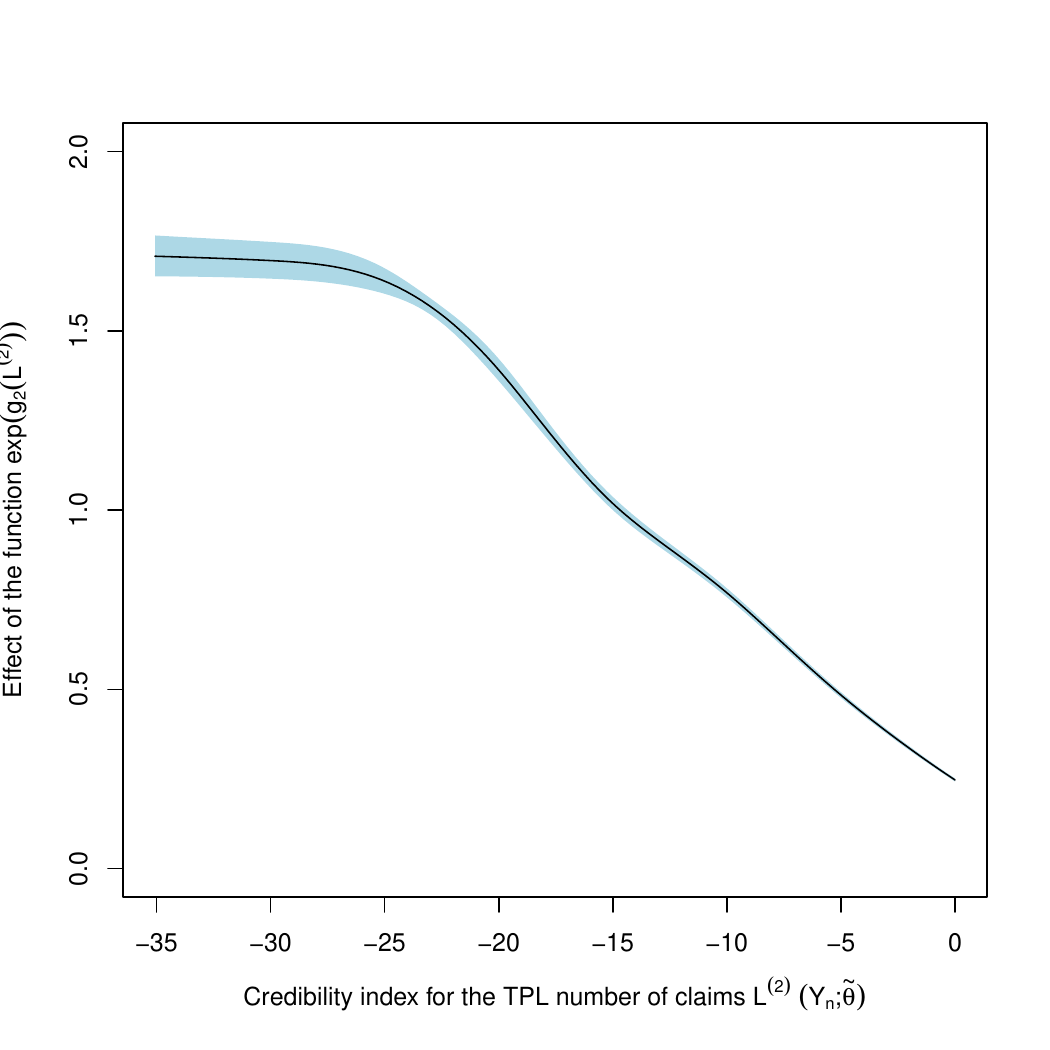}
         %\caption{Estimated function $g_2$}
          
     \end{subfigure}
     \hfill 
%     \end{figure}
 %\begin{figure}[!htbp]
 %\ContinuedFloat
     \begin{subfigure}[b]{0.45\textwidth}
         \centering
         \includegraphics[width=\textwidth]{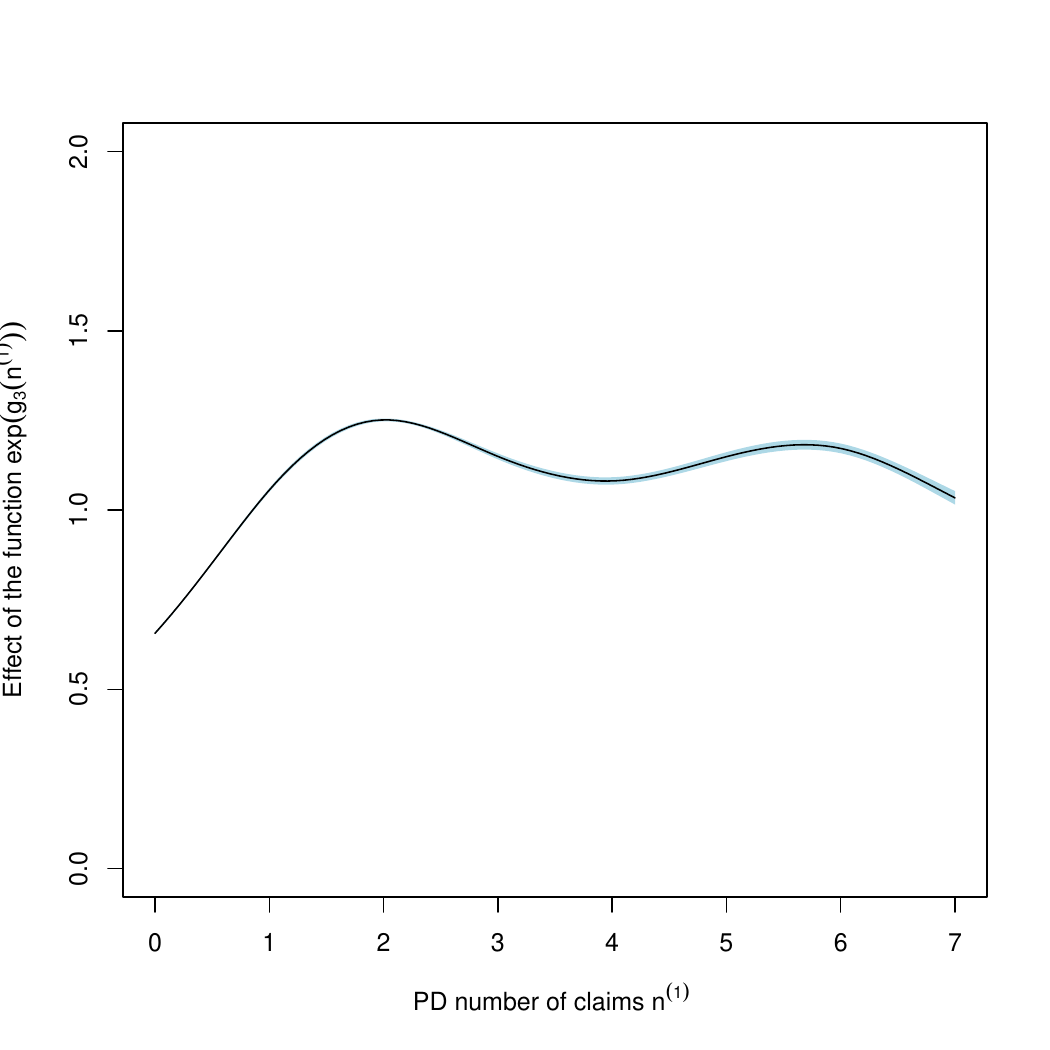}
         %\caption{Estimated function $g_3$}
          
     \end{subfigure}
     \hfill
     \begin{subfigure}[b]{0.45\textwidth}
         \centering
         \includegraphics[width=\textwidth]{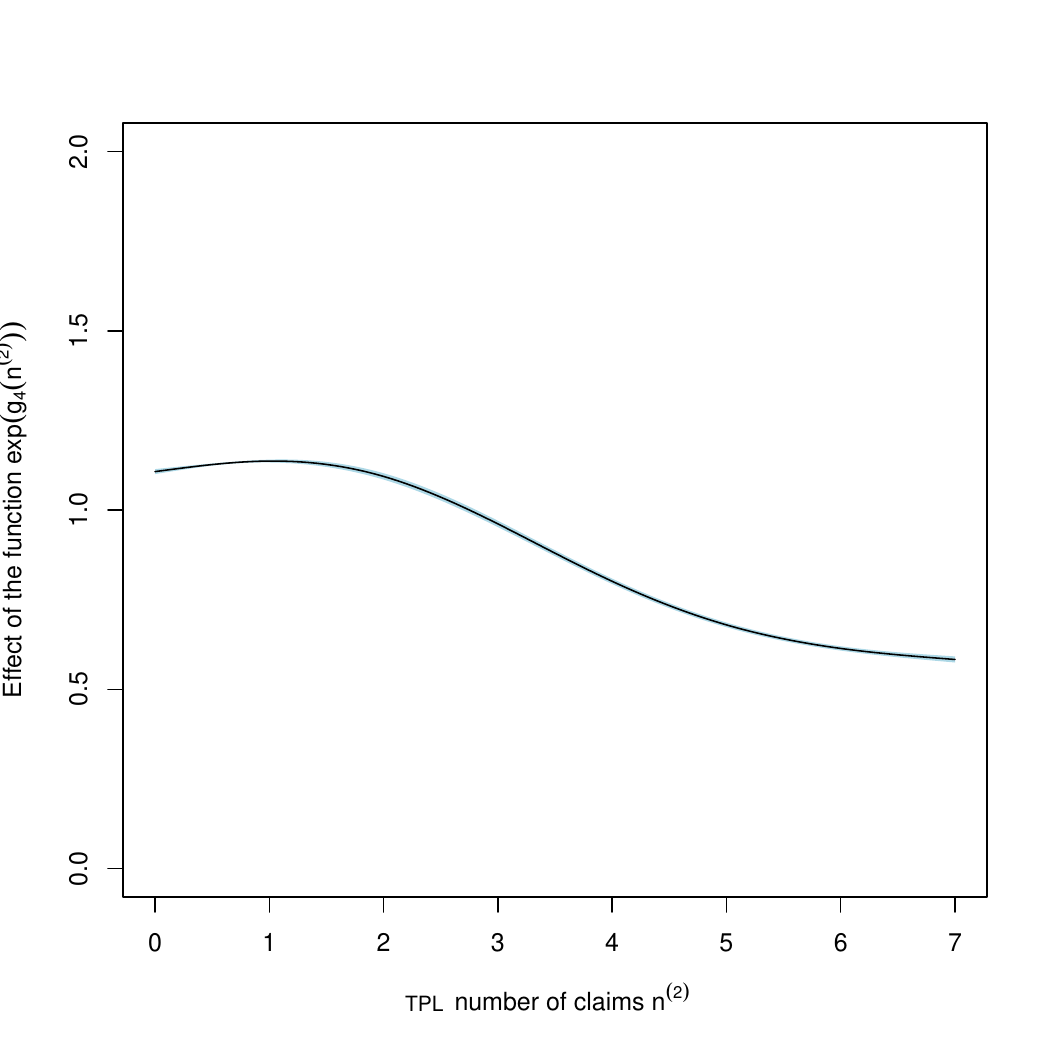}
         %\caption{Estimated function $g_4$}
          
     \end{subfigure}
%     \hfill
        \caption{Estimation of the functions $g(\cdot)$. Top-left is $g_1(\cdot)$, Top-right is $g_2(\cdot)$, Bottom-left is $g_3(\cdot)$ and Bottom-right is $g_4(\cdot)$ }
        \label{Estimatedparameters2}
\end{figure}

 \subsection{Goodness of fit}
 We now proceed to evaluate the suitability of the surrogate function as presented in Section \ref{assesment}. The evaluation involves two main tasks: verify the accuracy of the surrogate function at the Bayesian premiums and assess its reliability for extrapolation in the out sample when only $5\%$ of the portfolio is utilized. 
 
To assess the accuracy of the surrogate function, we compare the fitted values of the surrogate model with the Bayesian premiums. Figure \ref{gof3} on the left-hand side presents a jitter plot of the two premiums. The darker the color, the more concentration of points in that region. The plot indicates that our pricing formula produces premiums that are a close approximation to the Bayesian premiums, with the points closely clustered around the 45-degree line. Note that while some points lie outside the reference line, the vast majority concentrate close to it indicating a good performance of the surrogate.  To further evaluate the accuracy of the surrogate function, Table \ref{gof2} presents error metrics of the fitted premium versus the Bayesian premiums evaluated at both the policyholder level and the aggregate level. The interpolation results in an overall coefficient of determination of $R^2=0.97$, indicating that the pricing formula reproduces around 97\%  of the premiums variance, which is an overall descent level of interpolation. The fitted premiums deviate from the Bayesian premiums on average by only 3\%, and the mean error of interpolation is almost zero, implying no bias in the resulting estimation.

\begin{figure}[!htbp]
     \centering
     \begin{subfigure}[b]{0.45\textwidth}
         \centering
         \includegraphics[width=\textwidth]{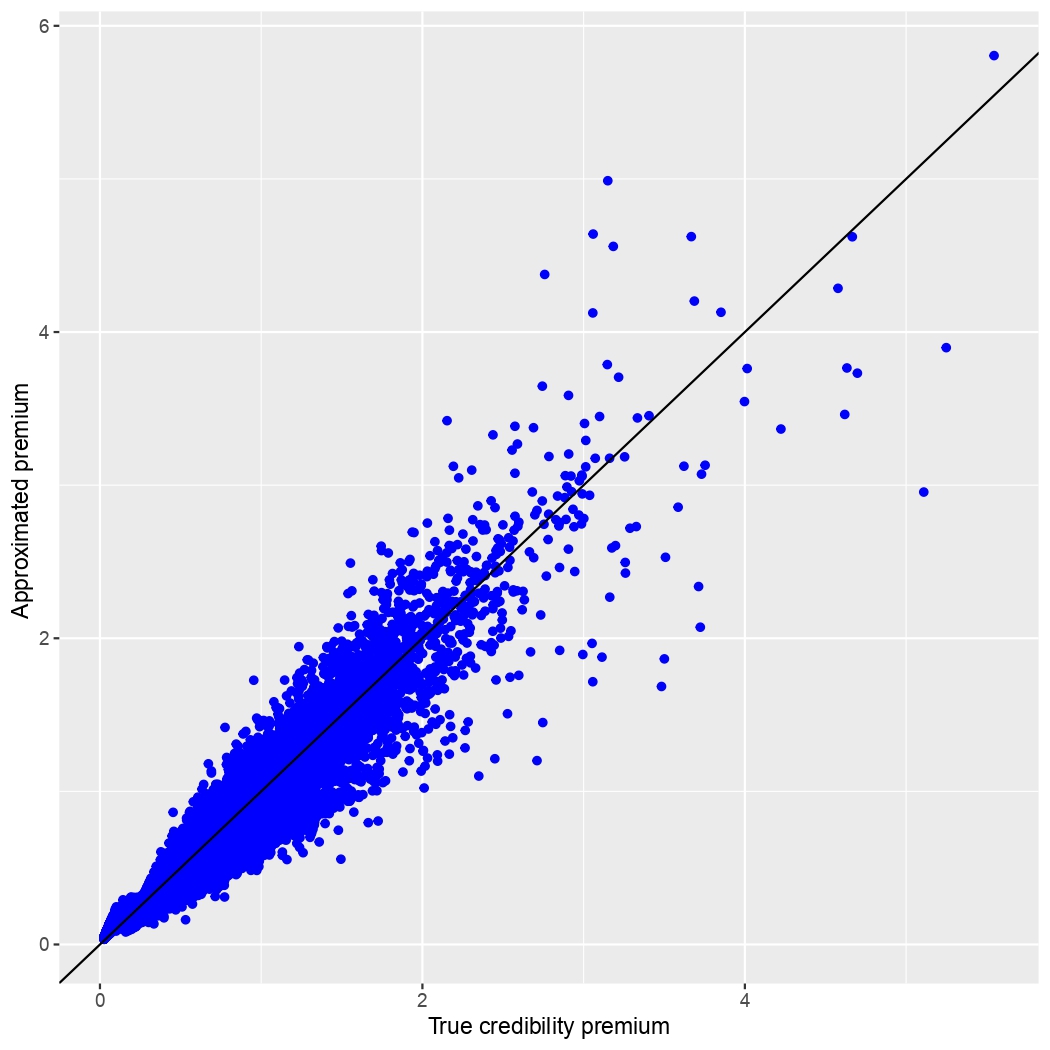}
         %\caption{True Premiums vs Fitted Premium}
          
     \end{subfigure}
     \hfill
     \begin{subfigure}[b]{0.45\textwidth}
         \centering
        \includegraphics[width=\textwidth]{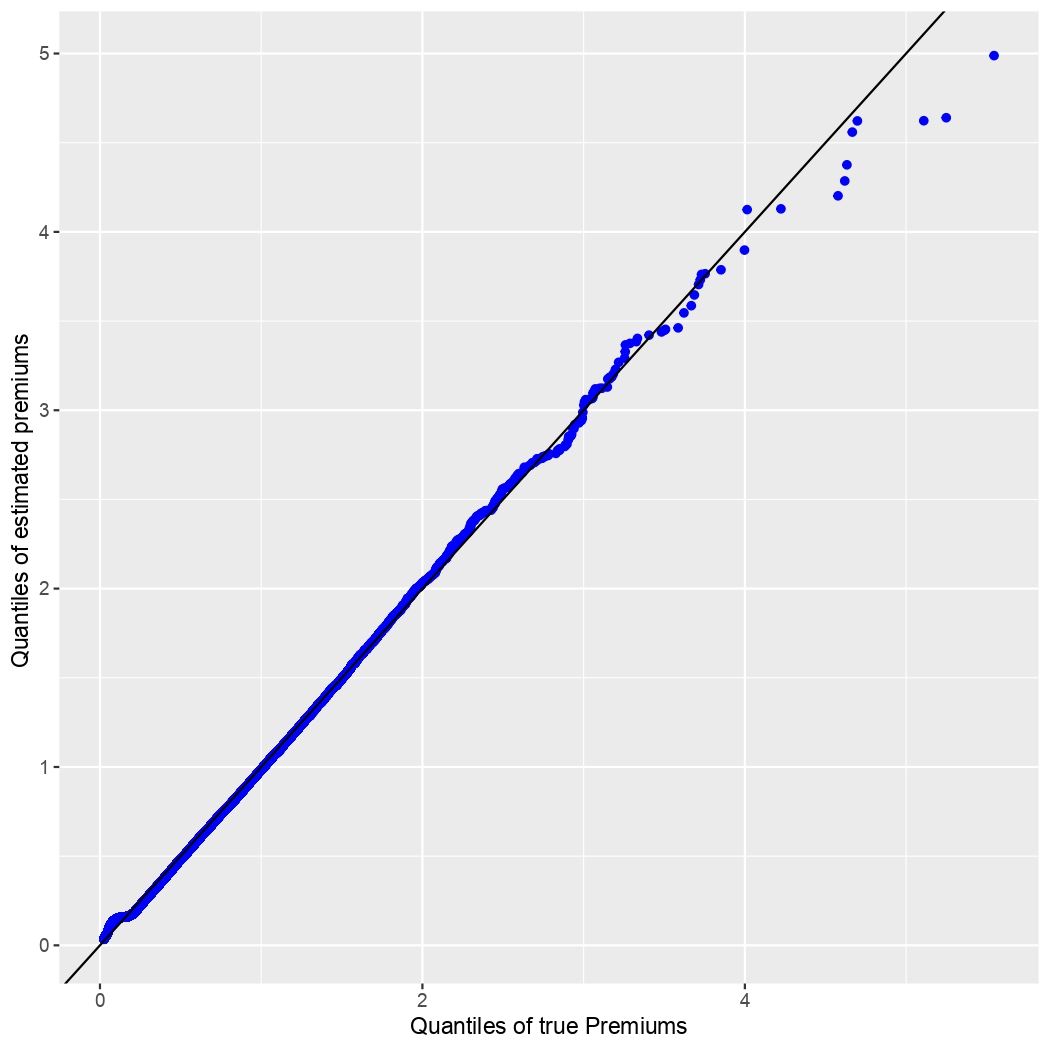}
       %  \caption{QQ-plot estimated vs true premiums }
          
     \end{subfigure} 
     \caption{  Comparison of approximated premiums vs true premiums. Left-hand side dispersion plot and Right-hand side QQ plot.  }
     \label{gof3}
     \end{figure}

To assess the reliability of extrapolation, we need to evaluate the out-of-sample performance of the predictive model on a test data set and ensure that it exhibits behavior consistent with the in-sample data. Specifically, in this application, as we computed the premiums for the entire portfolio, we used the remaining 95\% as our test data.   Table \ref{gof2} displays the error metrics for both the in-sample and out-of-sample policyholders. In this case, we observe that the error metrics for the out-of-sample policyholders are comparable to those obtained for the in-sample policyholders, with no appreciable differences. Hence, extrapolating the predictive power of the model from only 5\% of the portfolio is sufficient to provide reliable predictions for the entire portfolio using the surrogate model.

\begin{table}[!htbp]
          \centering
         
         \begin{tabular}{llllllll}
         \hline
        \hline 
        & \multicolumn{3}{c}{At Individual level} & & \multicolumn{3}{c}{At Aggregate level} \\ \cline{2-4}  \cline{6-8}         
        Sub-portfolio & ME  & MAE & MAPE & & $R^2$ &  Error & MPE  \\ \hline
        Out of sample & -0.0004 & 0.0233  & 0.093 & & 0.972 & 59.41 & 0.0035 \\
        In sample & -0.0006 & 0.0234 &  0.093 & & 0.973 & 46.57 & 0.0027       \\ \hline
        \hline
        \end{tabular}
\caption{   Error metrics of the surrogate model with rating factor functional form. Individual level means the metric is calculated at the policyholder level and then averaged. At the portfolio level, the metric is calculated on the aggregate premiums directly. Abbreviations meaning are ME: Mean Error, MAE: Mean Absolute Error, MAPE: Mean Absolute Percentage Error, $R^2$: Coefficient of determination. Error: True - Predicted, MPE: Mean Percentage Error}
\label{gof2}
\end{table}

Similarly, from an insurance company's perspective, it is relevant to assess the premiums at an aggregate level to ensure the necessary solvency of the company. This means verifying that the distribution of the premiums resulting from the surrogate model is similar to that obtained from the Bayesian premiums. To evaluate this, we compare the two distributions using a QQ plot, which is presented on the right-hand side of Figure \ref{gof3}. The plot shows that the distribution of the fitted premiums is essentially the same as that of the Bayesian premiums.  Moreover, Table \ref{gof2}  shows a comparison of the error on the premiums but at the portfolio level (i.e. total real amount vs total fitted amount). The mean percentage error (MPE) shows that the discrepancy between the aggregation of premiums vs the one predicted by the surrogate model is insignificant with an error of less than 0.5\% for both the in-sample and out-of-sample. 

Therefore, we conclude that the pricing formula provides an accurate approximation of the Bayesian premiums, and the insurance company can rely on the fitted premiums without significant differences in further portfolio-level metrics, such as total premium earned across the portfolio, loss ratios, and other quantities of interest.

Finally, for the sake of comparison, we assess the results of the surrogate model using an unstructured functional form in Figure \ref{gof5} and Table \ref{gof_un}. This model can be considered the most flexible surrogate model since it is constructed without imposing any functional restrictions on the dependence of the likelihood-based statistics. As such, it serves as a baseline for evaluating predictive power. The surrogate model is fitted using multidimensional B-Splines representation for the function $\hat{\tilde G}_{\Pi}$, and using as inputs the manual premium, the two likelihood-based sub-statistics, and the number of known periods in each policy. The results are displayed in Figure \ref{gof5}  and Table \ref{gof_un}, in the same fashion as the ones illustrated for the surrogate model with rating functional form.

\begin{table}[!htbp]
\centering

        \begin{tabular}{llllllll}
         \hline
        \hline 
        & \multicolumn{3}{c}{At Individual level} & & \multicolumn{3}{c}{At Aggregate level} \\ \cline{2-4}  \cline{6-8}         
        Sub-portfolio & ME  & MAE & MAPE & & $R^2$ &  Error & MPE  \\ \hline
        Out of sample & -0.0005 & 0.0036  & 0.018 & & 0.99 & 71.78 & 0.0041  \\
        In sample & -0.0006 & 0.0036 &  0.017 & & 0.99 & 42.66 &   0.0024     \\ \hline
        \hline
        \end{tabular}
 \caption{  Error metrics of the surrogate model with unstructured surrogate function. Individual level means the metric is calculated at the policyholder level and then averaged. At the portfolio level, the metric is calculated on the aggregate premiums directly. Abbreviations meaning are ME: Mean Error, MAE: Mean Absolute Error, MAPE: Mean Absolute Percentage Error, $R^2$: Coefficient of determination. Error: True - Predicted, MPE: Mean Percentage Error}
 \label{gof_un}
\end{table}

Our results in Table \ref{gof_un} demonstrate that using an unstructured functional form provides a relatively better fit than the rating factor functional form, with lower error metrics and a higher coefficient of determination of 99\%. This implies an almost perfect interpolation, indicating that the fitted premiums are much closer to the true premiums than the previous surrogate. Similarly,  the graph on the left-hand side of Figure \ref{gof5} shows that the fitted premiums are much closer to the true premiums than the rating functional form, in the sense of much lower fluctuation around the 45-degree line. Analogously, on the right-hand side of Figure \ref{gof5}, the QQplot shows that the distribution of the fitted premiums still resembles the same pattern of the true premiums, and the results at the aggregate level in Table \ref{gof_un} show almost insignificant differences with the total premiums at the aggregate level.

\begin{figure}[!htbp]
     \centering
     \begin{subfigure}[b]{0.45\textwidth}
         \centering
         \includegraphics[width=\textwidth]{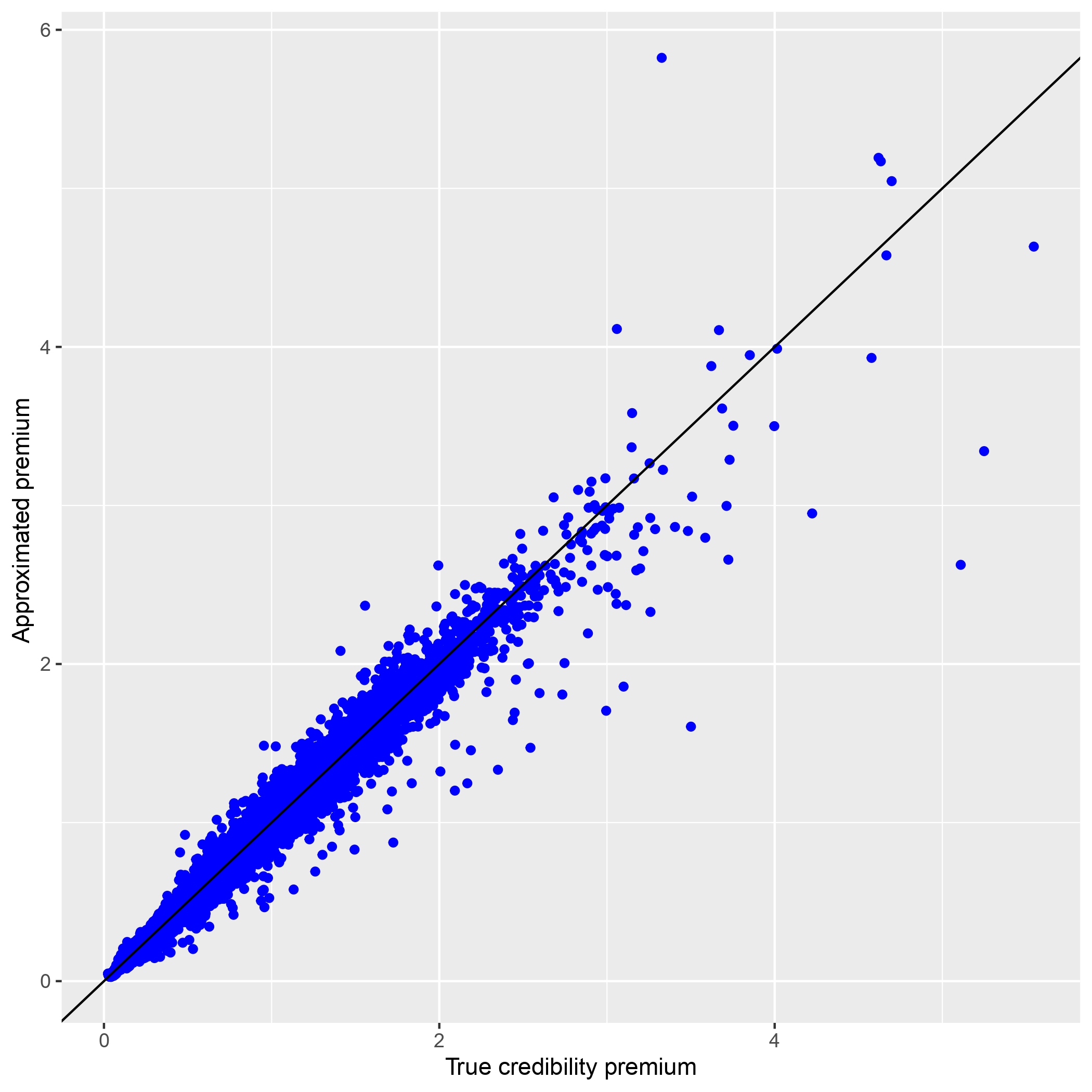}
         %\caption{True Premiums vs Fitted Premium}
          
     \end{subfigure}
     \hfill
     \begin{subfigure}[b]{0.45\textwidth}
         \centering
        \includegraphics[width=\textwidth]{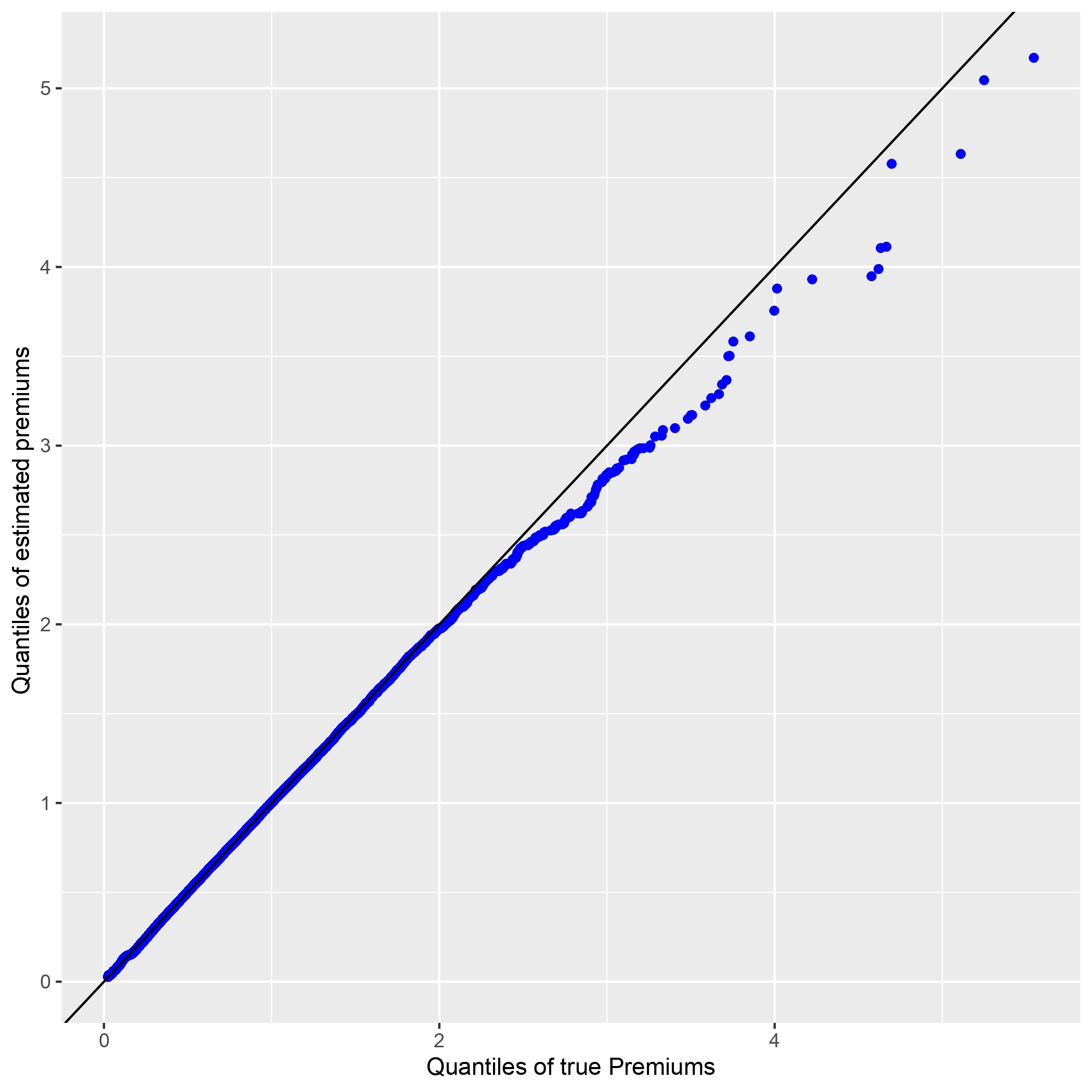}
       %  \caption{QQ-plot estimated vs true premiums }
          
     \end{subfigure} 
     \caption{  Comparison of approximated premiums vs true premiums for the unstructured model. Left-hand side dispersion plot and Right-hand side QQ plot.}
     \label{gof5}
     \end{figure}

We would like to note that, even though the surrogate model using the rating factor functional form provides a less competitive fit for the resulting premiums, it still provides  almost comparable results to the unstructured form, with the added benefit of providing a transparent ratemaking process.  Therefore, the choice between these two models will depend on the specific needs of the insurance company. Finally, we want to emphasize that the overall methodology via surrogate modeling inexpensively provides desirable results.

 \subsection{An example with two policyholders }

This subsection illustrates how the surrogate function is utilized for the experience rating process. We examine two policyholders who exhibit similar risk behavior, as indicated by their attributes, but possess different claim histories. These two policyholders have nearly identical predicted mean numbers of claims, as estimated by the model, and have held policies with the insurance company for a duration of $n=4$ years. The individual attributes of these policyholders are presented below.

\begin{table}[!htbp]
\small
\centering
\begin{tabular}{cccccccc}
\hline
\hline
\small
\textbf{Client ID}         & \textbf{CarWeight} & \textbf{EngDisp} & \textbf{CarAge} & \textbf{Age} & \textbf{EngPow} & \textbf{Fuel type} & \textbf{Expected Claims} \\ \hline
A    &  1515 & 1248 & 1 & 82 & 63& Gasoline & 0.245 (per year) \\ 
B  & 1475 & 1360 & 6 & 72          & 55  & Gasoline & 0.246 (per year) \\ \hline
\hline
\end{tabular}
\caption{Covariates of the Policyholders }
\end{table}

Policyholder A has not filed any claims in the past 4 years, neither in PD nor in TPL, while Policyholder B has filed two claims - one in PD and one in TPL. Consequently, it can be inferred that Policyholder A is less risky than Policyholder B. Intuitively, the premium for Policyholder A should decrease since no claims have been observed in 4 years, even though 1 claim is anticipated on average in this period (i.e., 4*0.25 = 1 claim). Conversely, the premium for Policyholder B should increase considerably as the total number of claims is twice the number expected from the model for this period (i.e., 2 claims vs. 4*0.25 = 1 claim).

Keeping this in mind, we further examine how the likelihood-based statistic and the fitted pricing formula reflect these two risk behaviors. Table \ref{tabexam} presents the likelihood-based statistic and the values $\tilde \theta_i$ for each policyholder.  While the values $\tilde \theta_i$ are in practice the same (because of the same predicted number of claims), note the difference in the magnitude of the likelihood-based sub-statistics. Specifically, for Policyholder A, who has not filed any claims, the sub-statistics exhibit relatively minimal values closer to 0, indicating a claim history that is less risky than expected, as per our interpretation of Figure \ref{Estimatedparameters2}. On the other hand, for Policyholder B, who has filed multiple claims, the sub-statistics exhibit the opposite behavior, with a large negative value indicating a claim history that is riskier than expected. Consequently, the likelihood-based statistic captures the distinct nature of the claim histories of these two policyholders.

\begin{table}[!htbp]
\centering
\begin{tabular}{cccc}
\hline
\hline
\small
\textbf{Client ID}         & {$\tilde \theta$} & $\mathcal{L}^{(1)}(\mathbf{Y}_n; \tilde \theta )$ & $\mathcal{L}^{(2)}(\mathbf{Y}_n; \tilde \theta )$    \\ \hline
A    &  {$1.61*10^{-4}$} & $-2.11*10^{-5}$ & $-1.27*10^{-4}$  \\ 
B  & {$1.62*10^{-4}$} & -12.1 & -10.3          \\ \hline
\hline
\end{tabular}
\caption{likelihood-based Sub-Statistics for Policyholders A and B}
\label{tabexam}
\end{table}

Now consider the estimated Bayesian premiums under the exponential principle in Table \ref{premsexample}. The premiums are computed as the product of the manual premium and the rating factor given in the pricing formula. We observe that the resulting premiums align with our intuitive analysis. Specifically, policyholder B, who has a claim history with two claims, is expected to pay almost twice as much in premiums as policyholder A, who has no claims in the past four years. Therefore, the claim history is the main driver of the differences in premiums, and the likelihood-based statistic is the measure that captures such discrepancies. 

\begin{table}[!htbp]
\centering
\small
\begin{tabular}{cccc}
\hline
\hline
\small
\textbf{Client ID} & \textbf{Manual Premium}     & $\exp ( g( \mathcal{L} (\mathbf{Y}_n; \tilde{\theta} ),n))$ & \textbf{Bayesian Premium}  \\ \hline
A & 0.253 &  0.854  & 0.216 \\ 
B & 0.253  & 1.696  & 0.429  \\ \hline
\hline
\end{tabular}
\caption{Calculation of Bayesian  premiums}
\label{premsexample}
\end{table}

\section{Conclusions}
\label{conclusion}

Performing accurate experience rating on large insurance portfolios is a challenging task due to two major problems: 1) accounting for the heterogeneity of the policyholders requires flexible, and possibly not mathematically tractable models that can fit complexity in the risk behavior of policyholders and  2) it is necessary to have large computational power to deal with large sized insurance portfolios, especially when no analytical solutions are available. The first issue can be partially addressed with general Bayesian models beyond the simplistic assumptions commonly used in insurance ratemaking. However, these methods heavily rely on computational techniques which may be affected due to the second problem. Therefore, it is of great importance for actuaries to effectively address the computational issues and to have an effective ratemaking system that is transparent and suited to actuarial standards.

In this paper, we propose a methodology that addresses these challenges by computing the Bayesian premium through a surrogate modeling approach based on a tailored-made summary statistic that we term the likelihood-based statistic. It can be seen as a measure of how likely it is for a policyholder to experience a certain claim history, under a different probability measure, and it is a sufficient statistic for several distribution families. This approach provides an analytical expression for the Bayesian premium, i.e. a pricing formula, which can alleviate the expensive calculation of Bayesian premiums in large portfolios and enable the actuary to interpret the results. The actuary can rely on the pricing formula to approximate premiums for any policyholder for any parametric model. Additionally, this expression enables the actuary to provide a transparent picture of the ratemaking process to clients and regulators and provides a reliable way of performing risk classification among the policyholders.

Future research can explore the application of the surrogate model and the likelihood-based statistic in the context of evolutionary credibility. Indeed, more recent claims may provide a better assessment of the current risk behavior of a policyholder, and so should have a larger impact than older claims when upgrading the premiums. This is usually achieved in insurance with State-Space models, e.g., \cite{ahn2021ordering}, which can still be embedded into the Bayesian model in Section \ref{framework} under a high dimensional latent variable vector. Therefore the methodology still applies, but much more research must be performed in future case studies.

Another future research direction is the construction of other summary statistics that complement the likelihood-based statistic. The likelihood-based statistic may not be the only quantity on which the claim history affects the predictive distribution of policyholders. Therefore, it may be possible to improve the accuracy of the pricing formula by considering other types of likelihood-based statistics in the model. Another research direction is to consider the development of a likelihood-based statistic that is ``non-parametric." The likelihood-based statistic here proposed is model-dependent, as it relies on a given parametric model. Thus, it is desirable to develop a likelihood-based statistic based on empirical data sets.

Finally, it is worth mentioning that the surrogate modeling approach can find applications in other areas not directly related to experience rating. For instance, this could be applied to predictive models for claim reserving to account for the policyholders' experiences and the current claims on development. Similarly, the applications of the likelihood-based statistic can be extended to the generality of Bayesian inference. For instance, a surrogate model can be used in approximate Bayesian computations to approximate expectations, particularly in EM algorithms to make them more efficient.

\section*{Acknowledgment}
This work was partly supported by Natural Sciences and Engineering
Research Council of Canada [RGPIN 284246, RGPIN-2017-06684].  The authors would like to thank an anonymous referee for their suggestions that improved the manuscript considerably.

\bibliographystyle{apalike}
\small
\bibliography{references}

\appendix

\section{  On the sufficiency of the likelihood-based summary statistic }
\label{sufficiency}

The primary motivation behind introducing the {likelihood-based statistic is to uncover a summary statistic of a policyholder's claim history that leads to a similar posterior predictive distribution when it is used instead of the whole data, thus rendering it suitable for the surrogate modeling as explained in Section \ref{surrogate}.

In this section, we present two key results concerning this property. The first result, detailed in Corollary \ref{corl}, demonstrates that inference based on the likelihood-based statistic approximates inference based on the complete claim history of a policyholder when the values of $\tilde \theta$ are judiciously selected. The second result, as outlined in Proposition \ref{suffindex} and supported by subsequent examples, delves even deeper by revealing that a subtle simplification of the likelihood-based statistic can serve as a sufficient statistic for certain family distributions. Notably, this includes the exponential dispersion family of distributions.

\subsection{ Approximate inference }

Here, we derive a significant result, presented in Corollary \ref{corl}, which affirms that the posterior predictive distribution based on the recently introduced likelihood-based statistic closely approximates the true posterior predictive distribution based on the complete claim history. This approximation can be achieved to a high degree of accuracy by ensuring that the set of values $\tilde \theta$ are properly selected.

The result is both intuitive and relatively straightforward to demonstrate. As mentioned earlier, the likelihood function (or log-likelihood) is always a sufficient statistic regardless of the model (Lemma 1 in \cite{mayoqualitative} or \cite{schweder2016confidence}). In mathematical terms, this implies that the set of statistics given by $\{ \mathcal{L} (\mathbf{Y}_n; \tilde \theta) \}_{\tilde \theta \in \mathcal{R}_{\Theta}}$,  serves as a sufficient statistic for $\Theta$, and therefore $P(Y_{n+1} \vert\{ \mathcal{L} (\mathbf{Y}_n; \tilde \theta) \}_{\tilde \theta \in \mathcal{R}_{\Theta}}) = P(Y_{n+1} \vert \mathbf{Y}_n)$. Here recall that the likelihood-based statistic is defined as the log-likelihood evaluated at a fixed value $\tilde \theta$, and therefore equivalent to the whole log-likelihood function when considering all the values of the parameter space.

Our likelihood-based statistic, however, does not encompass all the possibilities within the parameter space. Instead, it focuses on a subset of it, say $\mathcal{I} \subset \mathcal{R}_{\Theta} $,  and the inference is based solely on the associated subset of summary statistics $\{ \mathcal{L} (\mathbf{Y}_n; \tilde \theta) \}_{ \tilde \theta \in \mathcal{I}}$. Proposition \ref{limitsuff} below demonstrates that as we allow the set $\mathcal{I}$ to expand to cover the entire parameter space $\mathcal{R}_{\Theta}$, the inference approaches the true predictive distribution more closely.

\begin{proposition}
Let $ \mathcal{I}_1 \subset  \mathcal{I}_2 \subset \ldots \subset   \mathcal{I}_m \subset \ldots \subset \mathcal{R}_{\Theta} $ be a sequence of sets of values $\tilde \theta \in \mathcal{R}_{\Theta}$ that  increases to $\mathcal{R}_{\Theta} $. Let   $\{ \mathcal{L} (\mathbf{Y}_n; \tilde \theta) \}_{ \tilde \theta \in \mathcal{I}_m}$ be the  likelihood-based  statistics associated to the respective values $\tilde \theta \in \mathcal{I}_m $. Then,
$$
P(Y_{n+1} \in A \vert \{ \mathcal{L} (\mathbf{Y}_n; \tilde \theta) \}_{ \tilde \theta \in \mathcal{I}_m} )  \xrightarrow{\: m \to \infty \: } P(Y_{n+1} \in A \vert \mathbf{Y}_n)
$$
for every measurable set $A$.
\label{limitsuff}
\end{proposition}
\begin{proof}
Define the sequence of sigma-algebras $\mathcal{F}_m = \sigma( \{\mathcal{L} (\mathbf{Y}_n; \tilde \theta) \}_{ \tilde \theta \in \mathcal{I}_m})$ and $\mathcal{F}_\infty = \sigma( \{\mathcal{L} (\mathbf{Y}_n; \tilde \theta) \}_{ \tilde \theta \in \mathcal{R}_{\Theta}})$. As the sequence of sets $\mathcal{I}_m 	\uparrow \mathcal{R}_{\Theta}$, then we also have $\mathcal{F}_m \uparrow \mathcal{F}_{\infty}$. Therefore by, theorem 4.6.8 of \cite{durrett2019probability} applied to the indicator variable of the event $A$  we have:
$$
P(Y_{n+1} \in A \vert \{ \mathcal{L} (\mathbf{Y}_n; \tilde \theta) \}_{ \tilde \theta \in \mathcal{I}_m} )  = E( \mathbf{1}_{ \{Y_{n+1} \in A \}} \vert \mathcal{F}_m ) \xrightarrow{\: m \to \infty \: } E( \mathbf{1}_{ \{Y_{n+1} \in A \} } \vert \mathcal{F}_{\infty} ) =  P(Y_{n+1} \in A \vert \{ \mathcal{L} (\mathbf{Y}_n; \tilde \theta) \}_{ \tilde \theta \in \mathcal{R}_{\Theta}} )
$$
Lastly, the result follows by noting that $P(Y_{n+1} \in A \vert\{ \mathcal{L} (\mathbf{Y}_n; \tilde \theta) \}_{\tilde \theta \in \mathcal{R}_{\Theta}}) = P(Y_{n+1} \in A \vert \mathbf{Y}_n)$ because of the sufficiency of the likelihood function.
\end{proof}

\begin{corollary}
\label{corl}
Let $\epsilon >0$ and $A$ be a measurable set. Then there exists a proper subset $\mathcal{I} \subset \mathcal{R}_{\Theta} $ for which the associated likelihood-based  statistics satisfy:
$$
\vert    P(Y_{n+1} \in A \vert \{ \mathcal{L} (\mathbf{Y}_n; \tilde \theta) \}_{ \tilde \theta \in \mathcal{I}} )  - P(Y_{n+1} \in A \vert \mathbf{Y}_n)  \vert < \epsilon
$$
\end{corollary}
\begin{proof}
This follows immediately from the definition of limit.
\end{proof}

The preceding propositions emphasize that the proper selection of $\tilde \theta$ leads to effective performance in inference tasks based on the likelihood-based summary statistic. However, it is important to note that this result is more conducive to a conceptual understanding than a practical application, as the proof does not offer a constructive method for selecting the values of $\tilde \theta$. Furthermore, in our selection, we utilize a single value of $\theta$ per policyholder, whereas the approximation in the previous corollary may necessitate a more diverse selection to enhance performance.

While increasing the number of likelihood-based summary statistics per policyholder (and thus increasing the number of values of $\tilde \theta$ used) could potentially achieve a better performance, it is worth noting that this may not always be necessary. For example, Proposition \ref{suffindex} below demonstrates that a single $\tilde \theta$ suffices to obtain a sufficient statistic for the exponential family, thereby enabling exact inference. Similarly, our simulation study in Section \ref{simulation} indicates that, empirically, the approximation based on a single value of $\tilde \theta$ can yield satisfactory performance.

\subsection{A condition for sufficiency and examples}

To show the results on sufficiency, we first observe that the likelihood-based statistic can be simplified without any consequences. Note that if we can decompose  $\ell( \mathbf{Y}_n \vert \Theta) = l_1(\mathbf{Y}_n) + l_2(\mathbf{Y}_n,\Theta)$ for some functions $l_1, l_2$ (not necessarily probability functions), then we can omit the first term that depends only on $\mathbf{Y}_n$, as it simplifies both in the numerator and denominator of the predictive distribution in Equation (\ref{eqn1}). Therefore we can potentially find a simplified form of the likelihood-based statistic without losing any information when performing inference on the posterior predictive distribution. With this in mind, we introduce a refined version of the likelihood-based statistic and some of its properties.

\begin{definition}
The refined likelihood-based statistic denoted as $ \tilde{\mathcal{L}}(\mathbf{Y}_n; \tilde{\theta} ) $, is  defined as the ``minimal" term in the following additive  decomposition of the likelihood-based statistic:

$$
\mathcal{L} (\mathbf{Y}_n; \tilde{\theta} ) = L(\mathbf{Y}_n)+\tilde{\mathcal{L}} (\mathbf{Y}_n; \tilde{\theta} ).
$$

The term ``minimal" is defined in the sense that if there is a further sub-decomposition of the form $\tilde{\mathcal{L}} (\mathbf{Y}_n; \tilde{\theta} ) = L_1(\mathbf{Y}_n)+L_2 (\mathbf{Y}_n; \tilde{\theta} )$ for some functions $L_1, L_2$, then we must have $ \tilde{\mathcal{L}} (\mathbf{Y}_n; \tilde{\theta} ) =  L_2 (\mathbf{Y}_n; \tilde{\theta} ) $. 
\end{definition}

We now show that the refined likelihood-based statistic may be a sufficient statistic for the latent variable $\Theta$ under certain conditions, and so guarantee the effectiveness of the likelihood-based statistic in capturing the information of the claim history. To do so, let $T(\mathbf{Y}_n)$ be a sufficient statistic for the latent variable $\Theta$. By the Fisher–Neyman factorization theorem, e.g.,  \cite{casella2021statistical}, the conditional likelihood function can be factored into two non-negative functions, thus the log-likelihood function can be additively separated into two functions $l_1$ and $l_2$ as:
$$
\ell (\mathbf{Y}_n \vert \Theta) = l_1(\mathbf{Y}_n) + l_2(T(\mathbf{Y}_n),\Theta) ~~ \forall \mathbf{Y}_n, \Theta.
$$

\begin{proposition}
Suppose there exists a value $\tilde \theta \in R_{\Theta}$ at which $l_2(T(\mathbf{Y}_n),\tilde \theta)$ is a one-to-one function when viewed as a function of the sufficient statistic $T(\mathbf{Y}_n)$. Then the refined likelihood-based statistic $\tilde{\mathcal{L}} (\mathbf{Y}_n; \tilde \theta)$  at the value $\tilde \theta$, is a sufficient statistic for the latent variable $\Theta$.
\label{suffindex}
\end{proposition}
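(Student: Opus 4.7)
The plan is to combine the Fisher--Neyman factorization already invoked in the statement with the standard fact that a one-to-one transformation of a sufficient statistic is itself sufficient. The whole argument reduces to identifying, under the injectivity hypothesis, exactly what the refined credibility index equals, and then transferring sufficiency from $T$ across the bijection.

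First I would apply the Fisher--Neyman theorem to the assumed sufficient statistic $T(\mathbf{Y}_n)$ and evaluate at $\Theta = \tilde \theta$, plugging the resulting factorization into the definition of the credibility index to obtain
\begin{equation*}
\mathcal{L}(\mathbf{Y}_n; \tilde\theta) = l_1(\mathbf{Y}_n) + l_2(T(\mathbf{Y}_n), \tilde\theta).
\end{equation*}
This already exhibits a valid additive decomposition of the form $\mathcal{L} = L(\mathbf{Y}_n) + \tilde{\mathcal{L}}(\mathbf{Y}_n; \tilde\theta)$ in the sense required by the definition, with the natural candidate for the refined index being $\tilde{\mathcal{L}}(\mathbf{Y}_n; \tilde\theta) = l_2(T(\mathbf{Y}_n), \tilde\theta)$.

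Second, I would verify that this candidate is indeed the minimal term. Suppose one had a further split $l_2(T(\mathbf{Y}_n), \tilde\theta) = L_1(\mathbf{Y}_n) + L_2(\mathbf{Y}_n; \tilde\theta)$; since the left-hand side depends on $\mathbf{Y}_n$ only through $T(\mathbf{Y}_n)$, any two samples $\mathbf{Y}_n, \mathbf{Y}_n'$ with equal $T$-value yield the same value of $l_2$, so $L_1$ must itself factor through $T$. The injectivity of $l_2(\cdot, \tilde \theta)$ then forces the $\tilde\theta$-independent piece $L_1$ to be absorbable into $L_2$ up to constants, so no genuine further decomposition exists, confirming $\tilde{\mathcal{L}}(\mathbf{Y}_n; \tilde\theta) = l_2(T(\mathbf{Y}_n), \tilde\theta)$. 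Having pinned this down, the injectivity of $l_2(\cdot, \tilde\theta)$ makes the map $T(\mathbf{Y}_n) \mapsto \tilde{\mathcal{L}}(\mathbf{Y}_n; \tilde\theta)$ a bijection, so inverting it permits rewriting $l_2(T(\mathbf{Y}_n), \Theta) = g(\tilde{\mathcal{L}}(\mathbf{Y}_n; \tilde\theta), \Theta)$ for every $\Theta$. This produces a Fisher--Neyman factorization of $\ell(\mathbf{Y}_n \mid \Theta)$ expressed through $\tilde{\mathcal{L}}$, and the converse direction of the factorization theorem concludes that $\tilde{\mathcal{L}}(\mathbf{Y}_n; \tilde\theta)$ is sufficient for $\Theta$.

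The principal obstacle will be the minimality step, since the definition of ``minimal'' is stated somewhat informally: making the conclusion $L_1 \equiv \text{const}$ fully rigorous requires being explicit about the regularity assumed on $L_1, L_2$ and what ``further decomposition'' is permitted. Once that is pinned down, the rest is a direct application of the invariance of sufficiency under bijective transformations, and no probabilistic machinery beyond Fisher--Neyman is needed.
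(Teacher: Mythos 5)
Your proposal is correct and follows essentially the same route as the paper: evaluate the Fisher--Neyman decomposition at $\tilde\theta$, invoke the minimality clause in the definition to identify $\tilde{\mathcal{L}}(\mathbf{Y}_n;\tilde\theta) = l_2(T(\mathbf{Y}_n),\tilde\theta)$, and then use the injectivity hypothesis to transfer sufficiency from $T$ to the refined index as a one-to-one function of a sufficient statistic. Your added care about the informality of the ``minimal'' definition and your explicit reconstruction of the factorization through the inverse map are reasonable elaborations of steps the paper states more tersely, not a different argument.
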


\begin{proof}
The decomposition of the log-likelihood above holds for every value of $\theta \in \mathcal{R}_{\Theta}$, including the one in the assumption. When fixing the value of $\Theta$, the log-likelihood function at the left-hand side of the decomposition  becomes the likelihood-based statistic, and so at this particular value $\tilde \theta$  we must have the following relationship in terms of the refined likelihood-based statistic:

$$
\tilde{\mathcal{L}} (\mathbf{Y}_n; \tilde \theta) + L(\mathbf{Y}_n)= l_1(\mathbf{Y}_n) + l_2(T(\mathbf{Y}_n), \tilde \theta) 
$$

And so we  have:

$$
\tilde{\mathcal{L}} (\mathbf{Y}_n; \tilde \theta) = ( l_1(\mathbf{Y}_n) - L(\mathbf{Y}_n)) + l_2(T(\mathbf{Y}_n), \tilde \theta) 
$$

The last expression provides an additive decomposition of the refined likelihood-based statistic, and by the construction of its minimality, we must have  that:

$$
\tilde{\mathcal{L}} (\mathbf{Y}_n; \tilde \theta) = l_2(T(\mathbf{Y}_n), \tilde \theta) 
$$

Now, by the assumption of the function $l_2(T(\mathbf{Y}_n), \tilde \theta)$ at that specific value $\tilde \theta$, the right-hand side is a one-to-one function when viewed as a function of  $T(\mathbf{Y}_n)$. Therefore, the left-hand side is a one-to-one function of a sufficient statistic and so the refined likelihood-based statistic $\tilde{\mathcal{L}}  (\mathbf{Y}_n; \tilde \theta)$ is also a sufficient statistic for the latent variable $\Theta$. Moreover, if $T(\mathbf{Y}_n)$ is also minimal sufficient, then by the one-to-one correspondence, $\tilde{\mathcal{L}} (\mathbf{Y}_n, \tilde \theta)$ is also minimal sufficient. 
\end{proof}

\begin{example}[Exponential dispersion family]
In the particular case in which the model distribution $f(Y \vert \Theta, \mathcal{O})$ is given by a member of the exponential dispersion family of distributions with a dispersion parameter that does not depend on the latent variable. This is a construction widely used for insurance applications, e.g., \cite{wuthrich2022statistical}, we have:
$$
f(Y_j \vert \Theta = \theta, \mathcal{O})  = \exp( \frac{\theta S(Y_j) -C(\theta)}{\varphi_j} -Q(Y_j;\varphi_j))
$$

for some functions $S(\cdot), C(\cdot), Q(\cdot)$ and the set of parameters $\mathcal{O}$ contains only the dispersion parameters $\varphi_j$. That said, the conditional log-likelihood takes the form:
$$
\ell (\mathbf{Y}_n \vert \theta) =  \sum_{j=1}^n \frac{\theta S(Y_j) -C(\theta)}{\varphi_j} -\sum_{j=1}^n Q(Y_j;\varphi_j)
$$

Thus the  likelihood-based statistic and the refined likelihood-based statistic are respectively:
$$
\mathcal{L} (\mathbf{Y}_n; \tilde \theta) =  \tilde \theta \sum_{j=1}^n \frac{ S(Y_j)}{\varphi_j}   - \sum_{j=1}^n \frac{C(\tilde \theta)}{\varphi_j} -\sum_{j=1}^n Q(Y_j;\varphi_j)
$$
$$
\tilde{\mathcal{L}}  (\mathbf{Y}_n; \tilde \theta) = \tilde \theta \sum_{j=1}^n \frac{ S(Y_j)}{\varphi_j}   - \sum_{j=1}^n \frac{C(\tilde \theta)}{\varphi_j}
$$

It is known that $\sum_{j=1}^n \frac{ S(Y_j)}{\varphi_j}$ is the minimal sufficient statistic for the exponential dispersion family. Observe that the function linking the sufficient statistic and the refined likelihood-based statistic is a one-to-one function for $\tilde \theta \neq 0$. Therefore, by the proposition above, the refined likelihood-based statistic is also a minimal sufficient statistic. 
\end{example}

We note that the likelihood-based statistic is one-dimensional as it reduces the information of the whole claim history of a policyholder to a single quantity. Therefore, the likelihood-based statistic can potentially be a sufficient statistic when the latent variable $\Theta$ is one-dimensional. In the case of multivariate models for $Y_j$ with $D$ dimensions, this limitation can be weakened in the sense that now the likelihood-based sub-statistics can also play a role to constitute a set of summary statistics that all together may be sufficient for a latent variable $\Theta$. Indeed, the result of the proposition above extends directly if we consider a multivariate model for $Y_j$ in which each of the components $Y_j^{(d)}$  is associated with at most one of the components of the latent variable $\tilde \Theta ^{(d)}$ at a time. We illustrate this generalization with just an example as follows:

\begin{example}[Multivariate exponential dispersion family ]
Consider now the case of a particular multivariate exponential dispersion family of distributions with dispersion parameters that do not depend on the latent variable. This construction is also widely used for applications in insurance.
$$
f( Y_j \vert \Theta = \theta, \mathcal{O}) = \exp( \sum_{d=1}^D \frac{ \theta^{(d)} S^{(d)}(Y_j^{(d)}) -   C^{(d)}(\theta^{(d)})}{\varphi_j^{(d)}} -  \sum_{d=1}^D  Q^{(d)}(Y_j^{(d)}; \varphi_j^{(d)}) )
$$

for some functions $S^{(d)}(\cdot), C^{(d)}(\cdot), Q^{(d)}(\cdot)$. Again, the set of parameters $\mathcal{O}$ is associated with the dispersion parameters. Observe that in this particular case the function $C(\theta)$ is split additively into each of the components, which is the case when the components of the vector $Y_j$ are conditionally independent given $\Theta$. The conditional log-likelihood takes the form
$$
\ell (\mathbf{Y}_n \vert \theta) = \sum_{j=1}^n  \sum_{d=1}^D \frac{ \theta^{(d)} S^{(d)}(Y_j^{(d)}) -   C^{(d)}(\theta^{(d)})}{\varphi_j^{(d)}} - \sum_{j=1}^n  \sum_{d=1}^D  Q^{(d)}(Y_j^{(d)}; \varphi_j^{(d)}), 
$$

And so the  likelihood-based statistic and sub-statistics are respectively
$$
\mathcal{L} (\mathbf{Y}_n; \tilde \theta) = \sum_{d=1}^D  \tilde \theta^{(d)}  \sum_{j=1}^n   \frac{  S^{(d)}(Y_j^{(d)})}{\varphi_j^{(d)}} -  \sum_{d=1}^D \sum_{j=1}^n  \frac{  C^{(d)}(\tilde \theta^{(d)})}{\varphi_j^{(d)}} -  \sum_{d=1}^D \sum_{j=1}^n  Q^{(d)}(Y_j^{(d)}; \varphi_j^{(d)}), 
$$

$$
\mathcal{L}^{(d)} (\mathbf{Y}_n^{(d)}; \tilde \theta^{(d)}) =  \tilde \theta^{(d)}  \sum_{j=1}^n   \frac{  S^{(d)}(Y_j^{(d)})}{\varphi_j^{(d)}} -  \sum_{j=1}^n  \frac{  C^{(d)}(\tilde \theta^{(d)})}{\varphi_j^{(d)}} - \sum_{j=1}^n  Q^{(d)}(Y_j^{(d)}; \varphi_j^{(d)}), 
$$

$$
\mathcal{L} (\mathbf{Y}_n; \tilde \theta) = \sum_{d=1}^D  \mathcal{L}^{(d)} (\mathbf{Y}_n^{(d)}; \tilde \theta^{(d)}).
$$

The refined likelihood-based statistic and refined sub-statistics are respectively

$$
\tilde{\mathcal{L}} (\mathbf{Y}_n; \tilde \theta) = \sum_{d=1}^D  \tilde \theta^{(d)}  \sum_{j=1}^n   \frac{  S^{(d)}(Y_j^{(d)})}{\varphi_j^{(d)}} -  \sum_{d=1}^D \sum_{j=1}^n  \frac{  C^{(d)}(\tilde \theta^{(d)})}{\varphi_j^{(d)}},
$$

$$
\tilde{\mathcal{L}} ^{(d)} (\mathbf{Y}_n^{(d)}; \tilde \theta^{(d)}) =\tilde \theta^{(d)}  \sum_{j=1}^n   \frac{  S^{(d)}(Y_j^{(d)})}{\varphi_j^{(d)}} -  \sum_{j=1}^n  \frac{  C^{(d)}(\tilde \theta^{(d)})}{\varphi_j^{(d)}},
$$

$$
\tilde{\mathcal{L}}  (\mathbf{Y}_n; \tilde \theta) = \sum_{d=1}^D  \tilde{\mathcal{L}} ^{(d)} (\mathbf{Y}_n^{(d)}; \tilde \theta^{(d)}).
$$

It is known that the set of statistics $\sum_{j=1}^n   \frac{  S^{(d)}(Y_j^{(d)})}{\varphi_j^{(d)}}, ~ d =1, \ldots, D $ are the minimal sufficient statistic for this family,  and the function linking each of these sufficient statistics and the associated refined likelihood-based sub-statistics are one-to-one functions for $\tilde \theta^{(d)} \neq 0$. Therefore, the set of likelihood-based sub-statistics is also a minimal sufficient statistic. 
\end{example}

\end{document}